\newtheorem{lemma}{Lemma}
\newtheorem{theorem}{Theorem}
\newtheorem{corollary}{Corollary}
\theoremstyle{definition}
\newtheorem{definition}{Definition}
\newtheorem{assumption}{Assumption}
\newtheorem{question}{Question}
\theoremstyle{remark}
\newtheorem*{remark}{Remark}
\DeclareMathOperator{\diag}{diag}
\DeclareMathOperator*{\e}{\mathbb E}
\def\be#1\ee{\begin{equation}#1\end{equation}}
\def\ba#1\ea{\begin{align}#1\end{align}}
\def\bas#1\eas{\begin{align*}#1\end{align*}}
\begin{document}

\title{Thermalization without eigenstate thermalization}

\author[1]{Aram W. Harrow\thanks{aram@mit.edu}}
\author[1,2]{Yichen Huang (黄溢辰)\thanks{yichenhuang@fas.harvard.edu}}
\affil[1]{Center for Theoretical Physics, Massachusetts Institute of Technology, Cambridge, Massachusetts 02139, USA}
\affil[2]{Department of Physics, Harvard University, Cambridge, Massachusetts 02138, USA}

\begin{CJK}{UTF8}{gbsn}

\maketitle

\end{CJK}

\begin{abstract}

In an isolated quantum many-body system undergoing unitary evolution, we study the thermalization of a subsystem, treating the rest of the system as a bath. In this setting, the eigenstate thermalization hypothesis (ETH) was proposed to explain thermalization. Consider a nearly integrable Sachdev-Ye-Kitaev model obtained by adding random all-to-all $4$-body interactions as a perturbation to a random free-fermion model. When the subsystem size is larger than the square root of but is still a vanishing fraction of the system size, we prove thermalization if the system is initialized in a random product state, while almost all eigenstates violate the ETH. In this sense, the ETH is not a necessary condition for thermalization.

\end{abstract}

Preprint number: MIT-CTP/5467

\section{Introduction}

\subsection{Background}

Thermalization is a fundamental process in nature. It says that a system in contact with a bath tends to evolve to a Gibbs state described by the canonical ensemble. Suppose an isolated quantum many-body system is initialized in a pure state. Under unitary evolution the system stays in a pure state and never thermalizes. To study thermalization, we divide the system into two parts $A$ and $\bar A$ such that subsystem $A$ is much smaller than $\bar A$. We view $\bar A$ as a bath of $A$ and consider the thermalization of $A$, i.e., whether physical properties measured on $A$ evolve to those of a Gibbs state.

An important goal of statistical mechanics is to understand the mechanism of thermalization. One proposal is the eigenstate thermalization hypothesis (ETH) \cite{Deu91, Sre94, RDO08, DLL18}.

\begin{definition}
A state is $A$-thermal if its reduced density matrix for subsystem $A$ is approximately equal to\footnote{More precisely, ``approximately equal to'' means that the trace distance between the two reduced density matrices vanishes in the thermodynamic limit.} that of a Gibbs state with the same energy.
\end{definition}

\begin{definition} [eigenstate thermalization hypothesis] \label{def:eigt}
An eigenstate obeys the ETH with respect to subsystem $A$ if it is $A$-thermal.
\end{definition}

An important feature of this definition is the size of subsystem $A$. If a state is $A$-thermal for all subsystems $A$ of size $L$, then its $L$-point correlation functions are approximately equal to those of the Gibbs state. In the literature, the ETH is often defined only for $L=O(1)$. However, even some simple and experimentally accessible observables require considering $L$'s that grow with the system size $N$.  The scaling of $L$ with $N$ will be discussed in the context of the ETH in Subsection \ref{sec:ETH}.

Definition \ref{def:eigt} refers only to individual eigenstates, but one might also want to discuss the ETH for systems (i.e., Hamiltonians). In the literature, the term ``strong ETH'' or ``weak ETH'' \cite{BKL10, KLW15, BCSB19}, respectively, refers to systems where all or almost all eigenstates (possibly within some energy interval) obey the ETH. In this paper, unless otherwise noted, statements about the ETH refer to individual eigenstates.

For a Hamiltonian $H$, let $|\Psi_0\rangle$ and $|\Psi_t\rangle:=e^{-iHt}|\Psi_0\rangle$ be the initial and time-evolved states, respectively.

\begin{definition} [thermalization] \label{def:t}
Thermalization means that $|\Psi_t\rangle$ becomes $A$-thermal as $t$ grows.
\end{definition}

We present a well-known argument \cite{GE16, DKPR16, Deu18} for ``ETH implies thermalization'' based on two assumptions.

\begin{assumption} \label{ass:1}
The spectrum of $H$ is non-degenerate (all eigenvalues are distinct).
\end{assumption}

\begin{remark}
Intuitively, this assumption is usually valid if $H$ does not have any symmetry. Rigorously, it holds with probability $1$ if $H$ is a random Hamiltonian from the Gaussian unitary ensemble. It also holds for almost every local Hamiltonian on a lattice \cite{KLW15}.
\end{remark}

Let $|1\rangle,|2\rangle,\ldots$ be a complete set of eigenstates of $H$ with corresponding energies $E_1,E_2,\ldots$.

\begin{assumption} \label{ass:2}
The energy distribution of $|\Psi_0\rangle$ is sharply peaked around the mean $E:=\langle\Psi_0|H|\Psi_0\rangle$ in the sense that
\begin{equation} \label{eq:ass2}
\sum_{j:~E_j\approx E}p_j\approx1,\quad p_j:=\big|\langle j|\Psi_0\rangle\big|^2.
\end{equation}
\end{assumption}

\begin{remark}
If $H$ is a local Hamiltonian on a lattice, (\ref{eq:ass2}) with explicit error bounds was proved for any $|\Psi_0\rangle$ with exponential decay of correlations \cite{Ans16}.
\end{remark}

Assumption \ref{ass:1} implies that the time-averaged state
\begin{equation}
\bar\Psi:=\lim_{\tau\to\infty}\frac1\tau\int_0^\tau|\Psi_t\rangle\langle\Psi_t|\,\mathrm dt=\sum_jp_j|j\rangle\langle j|
\end{equation}
is obtained by dephasing $|\Psi_0\rangle$ in the energy eigenbasis. Then, Assumption \ref{ass:2} implies that
\begin{equation} \label{eq:expl}
\bar\Psi\approx\sum_{j:~E_j\approx E}p_j|j\rangle\langle j|.
\end{equation}
If every $|j\rangle$ in the sum on the right-hand side is $A$-thermal, then $\bar\Psi$ is $A$-thermal.

To establish thermalization, it is necessary that $\bar\Psi$ is $A$-thermal. Furthermore, one needs to prove equilibration, i.e., the temporal fluctuation of the reduced density matrix of $|\Psi_t\rangle$ for subsystem $A$ is small. This can be done under mild additional assumptions \cite{Rei08, LPSW09, Sho11}.

\begin{question} \label{q}
Is the ETH a necessary condition for thermalization?
\end{question}

A positive answer to this question would further justify the essence of the ETH as an explanation for the emergence of the canonical ensemble from unitary evolution. If the answer is negative, then it is time to call for other mechanisms of thermalization. Either way Question \ref{q} is illuminating.

The answer to Question \ref{q} depends on the set of initial states under consideration. If the initial state is an eigenstate, then the system does not evolve and thus thermalization trivially implies the ETH. However, eigenstates of local Hamiltonians typically have very high complexity and cannot be efficiently prepared; they could also be considered ``fine tuned'' since they form a discrete set.

De Palma et al.~\cite{DSGC15} considered initial states of the form $|\Psi_A\rangle\otimes|\Psi_{\bar A}\rangle$, where $|\Psi_A\rangle$ is an arbitrary pure state of subsystem $A$; $|\Psi_{\bar A}\rangle$ has a sharply peaked energy distribution but is otherwise arbitrary. If {\em all} such initial states thermalize, the ETH was proved under some assumptions, one of which is that the Hilbert space dimension of $A$ is much smaller than the heat capacity of $\bar A$. In a system of $N$ qubits, if the heat capacity is extensive, this assumption implies that $L\lesssim\ln N$, where $L$ is the number of qubits in $A$.  This means that almost all qubits are in $\bar A$.  Since the states $\ket{\Psi_{\bar A}}$ are so general, they typically have very high complexity.

The results of Ref.~\cite{DSGC15} can be rephrased as saying that a system violating the ETH must fail to thermalize for at least one initial state of the form $\ket{\Psi_A}\otimes \ket{\Psi_{\bar A}}$, where the size of $A$ is small and $\ket{\Psi_{\bar A}}$ has a sharply peaked energy distribution.  This leaves open the question of whether typical low-complexity initial states thermalize.

We will analyze initial states that are product across {\em all} cuts, not only the cut between $A$ and $\bar A$.  This choice is because these states are more relevant to experiments and are more plausible models of naturally occurring states.  

\subsection{Results (informal)} \label{sec:summary}

In this paper, we consider a nearly integrable complex Sachdev-Ye-Kitaev (SYK) model in a system of $N$ (Dirac) fermionic modes. The model has fermion number conservation and is obtained by adding random all-to-all $4$-body interactions as a perturbation to a random free-fermion model. If the perturbation is sufficiently small, the eigenstates are close to random Gaussian states with definite fermion number. We prove that they obey and violate the ETH with overwhelming probability for $L\ll \sqrt N$ and $L\gtrsim\sqrt N$, respectively, where $L$ is the number of fermionic modes in subsystem $A$. (We write $x\ll y$ if $x/y\to 0$ as $N\to\infty$; $x \gtrsim y$ if $x/y\ge c$ for some constant $c>0$.) Previously, Mag\'an \cite{Mag16} showed that random Gaussian states with definite fermion number obey the ETH in an average sense for $L\le2$.

For $L\gtrsim\sqrt N$, since the ETH fails, there is an observable (on subsystem $A$) that distinguishes an eigenstate from the thermal state.  Measuring this observable does not require many-body entangling operations. It can be done by measuring the occupation numbers of $L$ individual modes, in a basis chosen based on the eigenstate in question, and then classically processing the measurement results.  We describe this in more detail in Subsection \ref{ss:met}.

Let the initial state $|\Psi_0\rangle$ be a (random) product state, where each fermionic mode is either vacant or occupied. No matter how small the perturbation is, its effect on the dynamics becomes significant at sufficiently long times when the time is greater than the inverse of the perturbation strength. Previously, we proved entanglement thermalization with high probability \cite{HH23}.

\begin{definition} [entanglement thermalization \cite{ZKH15}] \label{def:entt}
For $L\le N/2$ and to leading order in $L$, the entanglement entropy of subsystem $A$ evolves to the thermodynamic entropy of $A$ at the same energy.
\end{definition}

Here we prove thermalization (Definition \ref{def:t}) with high probability when $L \ll N/\ln N$. However, thermalization never occurs when $L\ge cN$ for an arbitrarily small constant $c>0$. 

Table \ref{t:summary} summarizes our results. ``Thermalization without eigenstate thermalization'' (the title of this paper) is proved for $\sqrt N \lesssim L \ll N/\ln N$. Although the subsystem size $L$ is not upper bounded by a constant, it is still a vanishing fraction of the system size $N$.

\begin{table}
\caption{Summary of results in the thermodynamic limit $N\to \infty$. $L$ is the subsystem size. While eigenstate thermalization is a static property of the Hamiltonian, thermalization and entanglement thermalization are dynamic processes, in which the initial state is a (random) product state. Smiley (frown) means that the phenomenon in the column occurs (does not occur) when $N$ and $L$ satisfy the relation in the row. ``Thermalization without eigenstate thermalization'' is proved for $\sqrt N \lesssim L \ll N/\ln N$.}
\centering
\begin{tabular}{|c|c|c|c|} 
\hline
& eigenstate & thermalization & entanglement \\
& thermalization & & thermalization \\
& (Definition \ref{def:eigt}) & (Definition \ref{def:t}) & (Definition \ref{def:entt}) \\
\hline
$L \ll \sqrt N$ & {\Large\smiley} Theorem \ref{thm:ethv} & {\Large\smiley} Theorem \ref{cor} & \\
\cline{1-2}
$\sqrt N \lesssim L \ll N/\ln N$ & & & {\Large\smiley} \cite{HH23} \\
\cline{1-1} \cline{3-3} 
  $cN < L\le N/2$ & {\Large\frownie} Theorems \ref{thm:non}, \ref{thm:nontd} & {\Large\frownie} Theorem \ref{thm:nt} & \\
  for arbitrarily small constant  $c>0$  &&&\\
\hline
\end{tabular}
\label{t:summary}
\end{table}

``Thermalization without eigenstate thermalization'' can be understood as follows. $\bar\Psi$ is $A$-thermal if every $|j\rangle$ in the sum on the right-hand side of (\ref{eq:expl}) is $A$-thermal. However, this gives only a sufficient condition for $\bar\Psi$ being $A$-thermal. It could be possible that while most $|j\rangle$'s are not $A$-thermal, $\bar\Psi$ (a mixture of many $|j\rangle$'s) is $A$-thermal because the deviations of different $|j\rangle$'s from $A$-thermality cancel. This possibility provably occurs in our model.

\subsection{Eigenstate thermalization hypothesis} \label{sec:ETH}

\paragraph{ETH for systems.}Definition \ref{def:eigt} is the definition of the ETH for individual eigenstates. Previous work has also defined the ETH for systems: A Hamiltonian obeys the strong or weak ETH (in an energy interval) if all or almost all\footnote{``Almost all'' means that the fraction of ETH-violating eigenstates (in the energy interval) vanishes in the thermodynamic limit.} eigenstates (in the energy interval) obey the ETH, respectively.

Our results on the ETH (summarized in the second column of Table \ref{t:summary}) apply to random eigenstates of random Hamiltonians.  If we view them as statements about Hamiltonians, they imply that the weak ETH holds and fails with overwhelming probability for $L \ll \sqrt{N}$ and $L\gtrsim\sqrt{N}$, respectively.

Previous work by Mori and Shiraishi~\cite{MS17} showed evidence of thermalization in a model that obeys the weak ETH but not the strong ETH.\footnote{Using a combination of analytical and numerical methods, it was shown \cite{SM17} that an exponentially small (in the system size) fraction of the eigenstates of the model violate the ETH.} In their model, for a generic initial state, the total weight of ETH-violating eigenstates in the sum on the right-hand side of (\ref{eq:expl}) is negligible.  Thus, the thermalization observed in Ref. \cite{MS17} could be explained by the weak ETH. By contrast, in our model, for $\sqrt N \lesssim L \ll N/\ln N$, we observe thermalization without even the weak ETH, so $\bar\Psi$ is $A$-thermal even though most eigenstates on the right-hand side of (\ref{eq:expl}) are not.

\paragraph{Subsystem size in ETH.}The idea behind the ETH is that eigenstates should look thermal with respect to ``simple'' observables.  Since we lack a general proof of when the ETH holds, we cannot precisely determine which observables should be included here.  One plausible approach is to consider all observables that act non-trivially only on a sufficiently small subsystem. Indeed, Ref.~\cite{GG18} provided evidence that in systems with spatially local interactions, the ETH fails if the subsystem size is a constant fraction of the system size.

In our model, we observe a sharp threshold in the subsystem size $L$: If $L\ll \sqrt{N}$ then the ETH holds for almost all eigenstates and if $L \gtrsim\sqrt{N}$ then it fails for almost all eigenstates. Our model (\ref{eq:model}) has non-local interactions and has the same set of eigenstates as an integrable Hamiltonian. In the future, it would be interesting to study the validity of the ETH with respect to the subsystem size in systems that are non-integrable and/or have spatially local interactions.

Higher values of $L$ are relevant for more complicated observables such as $L$-point correlation functions.  They can also control the probability of large fluctuations for simpler observables~\cite{Touchette09}.  For example, in an $N$-mode fermionic system, let $Q$ be the total fermionic number operator (\ref{eq:Q}).  For $k\le N$, the expectation $\expval{Q^k}$ depends on $k$-point correlation functions.  While $k=1$ and $k=2$ give the expectation and variance of $Q$, higher values of $k$ can yield sharper bounds on the probability of large fluctuations in $Q$.

The scaling of $L$ with $N$ can be interpreted as the important question of how large the bath needs to be.  In an isolated system of size $N$, (eigenstate) thermalization with respect to subsystem $A$ of size $L$ means that a bath $\bar A$ of size $N-L$ suffices.  In our model, the threshold for the validity of the ETH is $L\sim\sqrt N$. Thus, the bath size must be $\gg L^2$ for the ETH to hold.  This is unrealistic when $L\sim 10^{23}$ is macroscopically large.  By contrast, thermalization occurs when $L\ll N/\ln N$ or as long as the bath size is $\gg L\ln L$. The ratio of the bath size to $L$ is not huge even if $L\sim 10^{23}$.

\section{Results (formal)}

\subsection{Model and definitions} \label{ss:MD}

The real (complex) SYK$q$ model \cite{SY93, Kit15, Sac15, MS16, GKST20} is a quantum mechanical model of Majorana (Dirac) fermions with random all-to-all $q$-body interactions (``$q$-body'' means that each term in the Hamiltonian acts non-trivially only on $q$ fermionic modes).

Consider an $N$-mode (Dirac) fermionic system with creation and annihilation operators $a_j^\dag,a_j$ indexed by $j=1,2,\ldots,N$. Let $A$ be an arbitrary subsystem of $L$ fermionic modes and $\bar A$ be the complement of $A$ (rest of the system).

\begin{definition} [complex SYK2 model]
Let $h$ be a random matrix of order $N$ from the Gaussian unitary ensemble. The Hamiltonian of the complex SYK2 model is
\begin{equation}
H_\textnormal{SYK2}=a^\dag ha,
\end{equation}
where $a:=(a_1,a_2,\ldots,a_N)^T$ is a column vector of $N$ annihilation operators.
\end{definition}

\begin{definition} [complex SYK4 model \cite{Sac15, GKST20}]
Let
\begin{equation} 
\mathcal I:=\left\{(j,k,l,m)\in\mathbb\{1,2,\ldots,N\}^{\times4}:(j<k)~\textnormal{and}~(l<m)~\textnormal{and}~(jN+k\le lN+m)\right\}
\end{equation}
and $J:=\{J_{jklm}\}_{(j,k,l,m)\in\mathcal I}$ be a collection of $|\mathcal I|$ independent complex Gaussian random variables with zero mean $\overline{J_{jklm}}=0$ and unit variance $\overline{|J_{jklm}|^2}=1$. The Hamiltonian of the complex SYK4 model is
\begin{equation}
H_\textnormal{SYK4}=\sum_{(j,k,l,m)\in\mathcal I}J_{jklm}a_j^\dag a_k^\dag a_la_m+\textnormal{H.c.},
\end{equation}
where ``H.c.'' means Hermitian conjugate.
\end{definition}

The complex SYK$q$ model is also known as the embedded Gaussian unitary ensemble \cite{Kot01, BW03} and has been studied under this name for decades.

Let
\begin{equation} \label{eq:Q}
Q:=\sum_{j=1}^Na_j^\dag a_j
\end{equation}
be the fermion number operator. Let $\epsilon_1,\epsilon_2$ be infinitesimal and
\begin{equation}
H_\textnormal{SYK}:=H_\textnormal{SYK2}+\epsilon_2H_\textnormal{SYK4}.
\end{equation}
Our model is
\begin{equation} \label{eq:model}
  H
  =Q+\epsilon_1H_\textnormal{SYK}
  =Q+\epsilon_1H_\textnormal{SYK2}+\epsilon_1\epsilon_2H_\textnormal{SYK4}.
\end{equation}
Both $H_\textnormal{SYK}$ and $H$ are nearly integrable as both $H_\textnormal{SYK2}$ and $Q+\epsilon_1H_\textnormal{SYK2}$ are integrable models of free fermions. By definition, the complex SYK2 and SYK4 models and hence $H_\textnormal{SYK}$ and $H$ conserve fermion number in that
\begin{equation} \label{eq:comm}
[H_\textnormal{SYK2},Q]=[H_\textnormal{SYK4},Q]=[H_\textnormal{SYK},Q]=[H,Q]=0.
\end{equation}

Let
\begin{equation}
\sigma_\beta:=e^{-\beta H}/\tr(e^{-\beta H})
\end{equation}
be a thermal state at inverse temperature $\beta$. Neglecting infinitesimal quantities, the reduced density matrix of subsystem $A$ is
\begin{equation} \label{eq:thr}
\sigma_{\beta,A}:=\tr_{\bar A}\sigma_\beta=e^{-\beta Q_A}/\tr(e^{-\beta Q_A}),
\end{equation}
where
\begin{equation} \label{eq:qa}
Q_A:=\sum_{j\in A}a_j^\dag a_j
\end{equation}
is the restriction of $Q$ to $A$. Let $|\Phi\rangle$ be such that that $Q|\Phi\rangle=n|\Phi\rangle$. If $\sigma_\beta$ and $|\Phi\rangle$ have the same energy, then
\begin{equation} \label{eq:temp}
\tr(\sigma_\beta H)=\langle\Phi|H|\Phi\rangle=n\implies\beta=\ln(N/n-1).
\end{equation}

Let $\|B\|_1:=\tr\sqrt{B^\dag B}$ denote the trace norm of a linear operator $B$. The trace distance
\begin{equation} \label{eq:tr}
T(\rho_1,\rho_2):=\|\rho_1-\rho_2\|_1/2=\max_{\|B\|\le1}|\tr(\rho_1B)-\tr(\rho_2B)|/2,\quad0\le T(\rho_1,\rho_2)\le1
\end{equation}
and the fidelity
\begin{equation}
F(\rho_1,\rho_2):=\tr^2\sqrt{\sqrt{\rho_1}\rho_2\sqrt{\rho_1}},\quad0\le F(\rho_1,\rho_2)\le1
\end{equation}
are measures of distinguishability between two density operators. It is well known that
\begin{equation} \label{eq:rel}
1-\sqrt{F(\rho_1,\rho_2)}\le T(\rho_1,\rho_2)\le\sqrt{1-F(\rho_1,\rho_2)}.
\end{equation}

The trace distance is directly related to the success probability of the optimal protocol for distinguishing two states \cite[Chap. 9]{NC00}. Specifically, consider the following state inference problem. We are given a random state $\rho$, which is either $\rho_1$ or $\rho_2$ with equal probability. We are allowed to perform a measurement on a single copy of $\rho$. From the measurement results, we must predict whether $\rho$ is $\rho_1$ or $\rho_2$. Let
\begin{equation}
\rho_1-\rho_2=\sum_j\mu_j|\Phi_j\rangle\langle\Phi_j|\quad\textnormal{with}\quad\sum_j\mu_j=0
\end{equation}
be the eigendecomposition of $\rho_1-\rho_2$. The optimal protocol is to measure $\rho$ in the orthonormal basis $\{|\Phi_j\rangle\}$. If the post-measurement state is $|\Phi_j\rangle$, then we predict $\rho_1$ if $\mu_j\ge0$ and predict $\rho_2$ if $\mu_j<0$. The success probability of this protocol is
\begin{equation}
\frac12+\frac12\sum_j|\mu_j|=\frac{1+T(\rho_1,\rho_2)}2.
\end{equation}

We use standard asymptotic notation. Let $f,g:\mathbb R^+\to\mathbb R^+$ be two functions. One writes $f(x)=O(g(x))$ if and only if there exist constants $M,x_0>0$ such that $f(x)\le Mg(x)$ for all $x>x_0$; $f(x)=\Omega(g(x))$ if and only if there exist constants $M,x_0>0$ such that $f(x)\ge Mg(x)$ for all $x>x_0$; $f(x)=\Theta(g(x))$ if and only if there exist constants $M_1,M_2,x_0>0$ such that $M_1g(x)\le f(x)\le M_2g(x)$ for all $x>x_0$; $f(x)=o(g(x))$ if and only if for any constant $M>0$ there exists a constant $x_0>0$ such that $f(x)<Mg(x)$ for all $x>x_0$.

\subsection{Eigenstate thermalization}

The spectrum of $H_\textnormal{SYK2}$ is non-degenerate with probability 1 \cite{HH23}. Then, due to fermion number conservation (\ref{eq:comm}) and since the perturbation $\epsilon_2H_\textnormal{SYK4}$ is infinitesimal, $H$, $H_\textnormal{SYK}$, $H_\textnormal{SYK2}$, and $Q+\epsilon_1H_\textnormal{SYK2}$ have the same set of eigenstates (up to an infinitesimal error), each of which has a definite fermion number.

Let $|\psi\rangle$ be an eigenstate of $H_\textnormal{SYK2}$ with fermion number $n$ and $\nu:=n/N$ be the filling fraction. We write $|\psi\rangle\sim\mathcal P_{N,n}$ if $|\psi\rangle$ is randomly sampled using the following procedure:
\begin{enumerate} [nosep]
\item Let $h$ be a random matrix of order $N$ from the Gaussian unitary ensemble.
\item $|\psi\rangle$ is chosen uniformly at random from the ${N\choose n}$ eigenstates of $a^\dag ha$ with fermion number $n$.
\end{enumerate}
Let $\psi_A:=\tr_{\bar A}|\psi\rangle\langle\psi|$ be the reduced density matrix. Recall the definition (\ref{eq:thr}) of $\sigma_{\beta,A}$, where $\beta$ is given by (\ref{eq:temp}).

\begin{theorem} [eigenstate thermalization] \label{thm:ethv}
Suppose that $1/2\ge\nu=\Omega(1)$. For $L=o(\sqrt N)$ and any $\Delta$ such that $\Omega(L^2)=\Delta=o(N)$,
\begin{gather}
\Pr_{|\psi\rangle\sim\mathcal P_{N,n}}\big(F(\psi_A,\sigma_{\beta,A})=1-O(\Delta/N)\big)=1-O(e^{-\Delta}),\label{eq:ethf}\\
\Pr_{|\psi\rangle\sim\mathcal P_{N,n}}\big(T(\psi_A,\sigma_{\beta,A})=O(\sqrt{\Delta/N})\big)=1-O(e^{-\Delta}).\label{eq:etht}
\end{gather}
\end{theorem}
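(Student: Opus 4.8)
The plan is to exploit the fact that every eigenstate of $H_\textnormal{SYK2}=a^\dag h a$ is a particle-number-definite fermionic Gaussian state, so that both $\rho_A$ and $\sigma_{\beta,A}$ are Gaussian and are completely determined by their two-point functions. Diagonalizing $h=U\Lambda U^\dag$ with $U$ Haar-distributed on $U(N)$, the eigenstate $|\psi\rangle$ occupies a fixed set $S$ of $n$ rotated modes, so the correlation matrix $C$ with entries $C_{jk}:=\langle\psi|a_k^\dag a_j|\psi\rangle$ equals $UPU^\dag$, where $P=\diag(\mathbf 1_S)$ is the rank-$n$ projection. Then $\rho_A$ is the Gaussian state whose correlation matrix is the $L\times L$ principal submatrix $C_A$, while by (\ref{eq:thr}) and (\ref{eq:temp}) the state $\sigma_{\beta,A}$ is the Gaussian state with correlation matrix $\nu I_L$. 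Since $\mathbb E[UPU^\dag]=\nu I_N$ by unitary invariance, $\rho_A$ matches $\sigma_{\beta,A}$ in expectation, and the whole problem reduces to controlling the fluctuation of $C_A$ about $\nu I_L$.

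Next I would convert closeness of correlation matrices into closeness of states. Because $\nu I_L$ is invariant under any change of single-particle basis, diagonalizing $C_A=VDV^\dag$ simultaneously brings both Gaussian states to products over the $L$ modes: $\rho_A$ becomes $\bigotimes_k\diag(1-d_k,d_k)$ and $\sigma_{\beta,A}$ becomes $\bigotimes_k\diag(1-\nu,\nu)$, where $d_1,\dots,d_L$ are the eigenvalues of $C_A$. The fidelity then factorizes, each single-mode factor being the squared Bhattacharyya coefficient $\big(\sqrt{(1-d_k)(1-\nu)}+\sqrt{d_k\nu}\big)^2$. Using $(\sqrt{d_k}+\sqrt\nu)^2\ge\nu$ and $(\sqrt{1-d_k}+\sqrt{1-\nu})^2\ge 1-\nu$ to bound the associated Hellinger distance, a short computation gives the clean estimate $1-F(\rho_A,\sigma_{\beta,A})\le\frac{1}{\nu(1-\nu)}\sum_k(d_k-\nu)^2=\frac{1}{\nu(1-\nu)}\|C_A-\nu I_L\|_F^2$. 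Since $\nu=\Omega(1)$ together with $\nu\le 1/2$ gives $\nu(1-\nu)=\Omega(1)$, it remains to show $\|C_A-\nu I_L\|_F^2=O(\Delta/N)$ with probability $1-O(e^{-\Delta})$.

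For the mean I would compute $\mathbb E\|C_A-\nu I_L\|_F^2=\sum_{j,k\in A}\mathbb E\big|C_{jk}-\nu\delta_{jk}\big|^2$ from the second moments of the entries of a Haar random projection (Weingarten calculus): both $\operatorname{Var}(C_{jj})$ and $\mathbb E|C_{jk}|^2$ for $j\neq k$ are $\Theta(\nu(1-\nu)/N)$, so $\mathbb E\|C_A-\nu I_L\|_F^2=\Theta(L^2/N)$. This already exhibits the $L\sim\sqrt N$ threshold and forces the hypothesis $\Delta=\Omega(L^2)$, since the high-probability bound $O(\Delta/N)$ must dominate the mean.

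The main obstacle is upgrading this to a bound with an exponential tail. The map $U\mapsto\|C_A-\nu I_L\|_F$ is $2$-Lipschitz in the Hilbert-Schmidt metric on $U(N)$ (from $\|UPU^\dag-U'PU'^\dag\|_F\le 2\|U-U'\|_F$ and the fact that passing to a submatrix does not increase the Frobenius norm), and it is invariant under the global phase $U\mapsto e^{i\theta}U$, so concentration of measure on $U(N)$ applies and yields sub-Gaussian tails with variance proxy $O(1/N)$. Taking a deviation of order $\sqrt{\Delta/N}$ above the mean $\mathbb E\|C_A-\nu I_L\|_F=O(L/\sqrt N)$, and using $\Delta=\Omega(L^2)$ to absorb that mean into the deviation, gives $\|C_A-\nu I_L\|_F^2=O(\Delta/N)$ with failure probability $O(e^{-\Delta})$ after absorbing the concentration constant, which is (\ref{eq:ethf}). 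Finally, (\ref{eq:etht}) follows on the same high-probability event from the right inequality in (\ref{eq:rel}), since $T(\rho_A,\sigma_{\beta,A})\le\sqrt{1-F(\rho_A,\sigma_{\beta,A})}=O(\sqrt{\Delta/N})$.
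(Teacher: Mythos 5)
Your proposal is correct, and its deterministic half coincides with the paper's: both reduce to the Gaussian structure of $\rho_A$ and $\sigma_{\beta,A}$, factorize the fidelity over the normal modes of $\mathbf C_A$ as in Eq.~(\ref{eq:f}), and bound each single-mode infidelity by $O\big((\lambda_j-\nu)^2\big)$ (the paper's Lemma~\ref{l:small} plays the role of your Hellinger/Bhattacharyya estimate, and $\nu=\Omega(1)$ makes the constants harmless in both versions). Where you genuinely diverge is the probabilistic input. The paper identifies the joint law of the eigenvalues of $\mathbf C_A$ as the Jacobi unitary ensemble and invokes a dedicated concentration result for the shifted second moment $M=\sum_j(\lambda_j-\nu)^2$ (Lemma~\ref{lem:M-conc}, imported from Ref.~\cite{HH22LJ}). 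You instead observe that $M=\|\mathbf C_A-\nu I_L\|_F^2$, compute $\e M=\Theta(L^2/N)$ by Weingarten calculus, and obtain the tail from generic measure concentration for the $2$-Lipschitz function $U\mapsto\|(UPU^\dag)_A-\nu I_L\|_F$ on the unitary group; your remark about invariance under the global phase is exactly the point needed to descend to $SU(N)$, where the log-Sobolev/curvature argument applies. This yields failure probability $e^{-\Omega(\Delta)}$, matching what the paper's lemma gives in this regime (both statements reach the advertised $O(e^{-\Delta})$ only after rescaling $\Delta$ by a constant, which the hypotheses $\Omega(L^2)=\Delta=o(N)$ tolerate). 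Your route is more self-contained, needing only standard Haar-measure concentration rather than a bespoke Jacobi-ensemble result; the paper's route has the advantage that the same lemma also supplies the lower tail on $M$ used for the failure of the ETH in Theorem~\ref{thm:non}, which one-sided upper concentration of a Lipschitz function does not immediately provide.
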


\begin{theorem} [failure of ETH] \label{thm:non}
Suppose that $1/2\ge\nu=\Omega(1)$. For $L=\Omega(\sqrt{N})$,
\begin{gather}
\Pr_{|\psi\rangle\sim\mathcal P_{N,n}}\big(F(\psi_A,\sigma_{\beta,A})=e^{-\Omega(L^2/N)}\big)=1-e^{-\Omega(L^2)},\label{eq:nons}\\
\Pr_{|\psi\rangle\sim\mathcal P_{N,n}}\big(T(\psi_A,\sigma_{\beta,A})=1-e^{-\Omega(L^2/N)}\big)=1-e^{-\Omega(L^2)}.\label{eq:nont}
\end{gather}
\end{theorem}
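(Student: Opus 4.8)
The plan is to work entirely at the level of one-particle correlation matrices, exploiting that both $\rho_A$ and $\sigma_{\beta,A}$ are fermionic Gaussian states, and to prove the fidelity bound (\ref{eq:nons}) first; the trace-distance bound (\ref{eq:nont}) will then follow for free. Writing $\Pi$ for the rank-$n$ orthogonal projection onto the occupied single-particle orbitals of $|\psi\rangle$, the sampling $|\psi\rangle\sim\mathcal P_{N,n}$ makes $\Pi=UPU^\dagger$ with $U$ Haar-distributed on $\mathrm U(N)$ and $P=\diag(1,\dots,1,0,\dots,0)$ ($n$ ones). Hence $\rho_A$ is the Gaussian state whose correlation matrix $C_A:=(\langle\psi|a_j^\dagger a_k|\psi\rangle)_{j,k\in A}$ is the $L\times L$ principal submatrix of $\Pi$, while by (\ref{eq:thr}) the thermal state $\sigma_{\beta,A}$ is Gaussian with the isotropic correlation matrix $\nu I_L$. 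First I would use this isotropy: diagonalizing $C_A=V\Lambda V^\dagger$ and applying the number-conserving Gaussian unitary on $A$ that implements $V$ leaves $\sigma_{\beta,A}$ invariant (its correlation matrix stays $\nu I_L$) while turning $\rho_A$ into a product of single-mode states with occupations $\lambda_1,\dots,\lambda_L$ (the eigenvalues of $C_A$). Since fidelity is unitarily invariant, it factorizes:
\[
F(\rho_A,\sigma_{\beta,A})=\prod_{i=1}^L\left(\sqrt{\lambda_i\nu}+\sqrt{(1-\lambda_i)(1-\nu)}\right)^2 .
\]

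Next I would convert this product into a quadratic form in the deviations $\lambda_i-\nu$. Writing $g(\lambda):=-2\ln(\sqrt{\lambda\nu}+\sqrt{(1-\lambda)(1-\nu)})$, the Bhattacharyya coefficient is smooth, equals $1$ with vanishing derivative at $\lambda=\nu$, and is strictly concave on $[0,1]$; since $\nu$ is bounded away from $0$ and $1$ (as $1/2\ge\nu=\Omega(1)$), a compactness argument gives a constant $c=c(\nu)>0$ with $g(\lambda)\ge c(\lambda-\nu)^2$ uniformly on $[0,1]$. Therefore
\[
-\ln F(\rho_A,\sigma_{\beta,A})=\sum_{i=1}^Lg(\lambda_i)\ge c\sum_{i=1}^L(\lambda_i-\nu)^2=c\,\tr\!\big((C_A-\nu I_L)^2\big)=c\,\|C_A-\nu I_L\|_{\mathrm F}^2 ,
\]
so it suffices to prove that $\|C_A-\nu I_L\|_{\mathrm F}^2=\Omega(L^2/N)$ with probability $1-e^{-\Omega(L^2)}$.

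This last statement is the probabilistic heart of the argument, and I expect it to be the main obstacle. For the expectation I would use the standard second moments of a Haar-random projection: $\mathbb E[\Pi_{jj}]=\nu$, $\mathrm{Var}(\Pi_{jj})=\Theta(\nu(1-\nu)/N)$, and $\mathbb E|\Pi_{jk}|^2=\Theta(\nu(1-\nu)/N)$ for $j\ne k$ (the last following from $\sum_k|\Pi_{jk}|^2=\Pi_{jj}$). Summing over the $L^2$ entries gives $\mathbb E\|C_A-\nu I_L\|_{\mathrm F}^2=\Theta(\nu(1-\nu)L^2/N)=\Theta(L^2/N)$. For the high-probability lower bound I would apply concentration of measure on $\mathrm U(N)$ to $Y(U):=\|C_A-\nu I_L\|_{\mathrm F}$. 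The map $U\mapsto\Pi=UPU^\dagger\mapsto C_A\mapsto Y$ is $O(1)$-Lipschitz in the Frobenius metric (using $\|P\|=1$, submultiplicativity $\|ABC\|_{\mathrm F}\le\|A\|\,\|B\|_{\mathrm F}\,\|C\|$ applied to $\Pi=UPU^\dagger$, and that passing to a principal submatrix and subtracting $\nu I_L$ cannot increase the Lipschitz constant), so the curvature $\Theta(N)$ of $\mathrm U(N)$ yields $\Pr(|Y-\mathbb E Y|\ge t)\le 2e^{-\Omega(Nt^2)}$. Because $Y$ is $O(1)$-Lipschitz its variance is $O(1/N)=o(1)$, while $\mathbb E Y^2=\Theta(L^2/N)=\Omega(1)$ for $L=\Omega(\sqrt N)$; hence $(\mathbb E Y)^2=\mathbb E Y^2-\mathrm{Var}(Y)=\Theta(L^2/N)$ and $\mathbb E Y=\Theta(L/\sqrt N)$. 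Taking $t=\tfrac12\mathbb E Y=\Theta(L/\sqrt N)$ gives $Nt^2=\Theta(L^2)$, so with probability $1-e^{-\Omega(L^2)}$ we have $Y\ge\tfrac12\mathbb E Y$, i.e. $\|C_A-\nu I_L\|_{\mathrm F}^2=\Omega(L^2/N)$. The delicate point is verifying that the Lipschitz constant is an absolute constant, so that the exponent comes out as exactly $\Theta(L^2)$ rather than something weaker.

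Combining the steps, $-\ln F\ge c\,\|C_A-\nu I_L\|_{\mathrm F}^2=\Omega(L^2/N)$ with probability $1-e^{-\Omega(L^2)}$, which is (\ref{eq:nons}). The trace-distance bound (\ref{eq:nont}) then follows immediately from the Fuchs--van de Graaf inequality (\ref{eq:rel}): on the same high-probability event, $1-T(\rho_A,\sigma_{\beta,A})\le\sqrt{F(\rho_A,\sigma_{\beta,A})}=e^{-\Omega(L^2/N)}$.
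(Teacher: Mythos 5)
Your proposal is correct, and while its first half coincides with the paper's argument, the probabilistic core is genuinely different. The deterministic reduction --- factorizing the fidelity over the eigenvalues of $\mathbf C_A$ as in (\ref{eq:f}) and bounding each factor by $1-c(\lambda_j-\nu)^2$ --- is exactly what the paper does via Lemma \ref{l:small} (which achieves $c=1$ by a trigonometric substitution; your ``compactness'' step can be made quantitative by noting that the Bhattacharyya coefficient has second derivative at most $-1/4$ on $(0,1)$). Where you diverge is in showing $M=\sum_j(\lambda_j-\nu)^2=\Omega(L^2/N)$ with probability $1-e^{-\Omega(L^2)}$: the paper invokes a dedicated concentration result for the second moment of the Jacobi unitary ensemble (Lemma \ref{lem:M-conc}, imported from Ref.~\cite{HH22LJ}), which only applies for $L\le\min\{n,N-n\}$ and therefore forces the separate case $L>\min\{n,N/3\}$ handled by monotonicity of fidelity under partial trace. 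You instead combine an elementary Weingarten-type second-moment computation ($\e\|\mathbf C_A-\nu I_L\|_F^2=\Theta(L^2/N)$) with Lipschitz concentration of $Y=\|\mathbf C_A-\nu I_L\|_F$ on the unitary group, which treats all $L\le N$ uniformly and needs only a one-sided deviation bound. This buys self-containedness and avoids the case split, at the cost of two technicalities you should make explicit: (i) concentration with rate $e^{-\Omega(Nt^2)}$ holds on $SU(N)$ by Bakry--\'Emery, and since $Y$ is invariant under $U\mapsto e^{i\theta}U$ you may reduce from $U(N)$ to $SU(N)$ before applying it; (ii) it is essential that you apply concentration to $Y$ rather than $Y^2$ (whose Lipschitz constant is $\Theta(\sqrt L)$), and you do. The final passage to (\ref{eq:nont}) via (\ref{eq:rel}) is identical to the paper's.
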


\begin{theorem} [failure of ETH] \label{thm:nontd}
Suppose that $\nu\le1/2$. For $L>n$,
\begin{equation}
F(\psi_A,\sigma_{\beta,A})\le(1-\nu)^{L-n},\quad T(\psi_A,\sigma_{\beta,A})\ge1-(1-\nu)^{L-n}.
\end{equation}
\end{theorem}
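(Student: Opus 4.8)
The plan is to exploit the fact that $|\psi\rangle$ is a fermionic Gaussian (Slater-determinant) state, so that both $\rho_A$ and $\sigma_{\beta,A}$ are number-conserving Gaussian states determined entirely by their one-particle correlation matrices, and then to show that the constraint $L>n$ forces $\rho_A$ to possess many \emph{deterministically empty} modes that $\sigma_{\beta,A}$ does not. This is a purely structural, deterministic statement, so no averaging over $\mathcal{P}_{N,n}$ is needed.

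First I would record the correlation matrices. Since $|\psi\rangle$ is an eigenstate of $a^\dag h a$ with $n$ occupied orbitals, its one-particle correlation matrix $C$ (entries $\langle\psi|a_i^\dag a_j|\psi\rangle$) is an orthogonal projector of rank $n$ on the single-particle space $\mathbb{C}^N$. The reduced state $\rho_A=\tr_{\bar A}|\psi\rangle\langle\psi|$ is then the Gaussian state whose correlation matrix $C_A$ is the $L\times L$ principal submatrix of $C$ indexed by $A$, while $\sigma_{\beta,A}$ is Gaussian with isotropic correlation matrix $\nu I_L$ (each mode carries occupation $\nu$, consistent with (\ref{eq:temp})). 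The crucial observation is a rank count: $C_A$ is a compression of a rank-$n$ projector, so $\operatorname{rank}C_A\le n$ with all eigenvalues in $[0,1]$; since $L>n$, the matrix $C_A$ has at least $L-n$ zero eigenvalues.

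The key step is to diagonalize. Let $W$ be the single-particle unitary with $W^\dag C_A W=\diag(\lambda_1,\dots,\lambda_L)$, where at least $L-n$ of the $\lambda_i$ vanish, and let $U_W$ be the corresponding number-conserving orbital-rotation unitary on Fock space; conjugation by $U_W$ leaves both $F$ and $T$ unchanged. In the rotated modes $U_W\rho_A U_W^\dag=\bigotimes_{i=1}^L\tau_{\lambda_i}$ is a product of single-mode states with occupations $\lambda_i$ (a zero eigenvalue giving the vacuum $|0\rangle\langle0|$), whereas $\sigma_{\beta,A}$ is invariant, $W^\dag(\nu I_L)W=\nu I_L$, so it remains $\tau_\nu^{\otimes L}$. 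Both states are now products over the same $L$ modes. For the fidelity, multiplicativity of $F$ over tensor factors gives $F(\rho_A,\sigma_{\beta,A})=\prod_i F(\tau_{\lambda_i},\tau_\nu)$; each factor with $\lambda_i=0$ equals $F(|0\rangle\langle0|,\tau_\nu)=\langle0|\tau_\nu|0\rangle=1-\nu$ and the remaining factors are at most $1$, so $F\le(1-\nu)^{L-n}$. For the trace distance, still in the rotated basis, I would test against the projector $P$ onto ``the $L-n$ vacuum modes are empty'': then the rotated $\rho_A$ has weight $\tr(\rho_A P)=1$ while $\tr(\sigma_{\beta,A}P)=(1-\nu)^{L-n}$, so by unitary invariance $T(\rho_A,\sigma_{\beta,A})\ge 1-(1-\nu)^{L-n}$.

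The main thing to get right — rather than a deep obstacle — is the Gaussian bookkeeping: verifying that $C_A$ has the stated rank and spectrum in $[0,1]$, that a number-conserving Gaussian state factorizes into single-mode states in the eigenbasis of its correlation matrix, and that the same rotation $U_W$ preserves $\sigma_{\beta,A}$ by isotropy. I would flag one pitfall for the trace-distance bound: combining $F\le(1-\nu)^{L-n}$ with the Fuchs--van de Graaf inequality (\ref{eq:rel}) yields only $T\ge1-(1-\nu)^{(L-n)/2}$, which is weaker than claimed, so the direct projector argument is essential to obtain the stated exponent $L-n$. One can check that both bounds are already tight for $L=2$, $n=1$, $\nu=1/2$.
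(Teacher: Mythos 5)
Your proof is correct and follows essentially the same route as the paper: both exploit the rank deficiency of $\mathbf C_A$ (at least $L-n$ zero eigenvalues), the product structure of $\rho_A$ and $\sigma_{\beta,A}$ in the eigenbasis of $\mathbf C_A$, multiplicativity of fidelity, and a data-processing/witness argument for the trace distance (the paper traces down to the $L-n$ vacuum modes and evaluates the trace distance to the resulting pure state, which is equivalent to your projector test and likewise avoids the lossy Fuchs--van de Graaf route you correctly flag). One minor aside: your tightness check at $L=2$, $n=1$, $\nu=1/2$ is off --- there $F(\rho_A,\sigma_{\beta,A})=1/4<1/2$ and $T(\rho_A,\sigma_{\beta,A})=3/4>1/2$ --- but this does not affect the validity of the argument.
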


These results on the ETH are for individual eigenstates. To interpret Theorems \ref{thm:ethv}, \ref{thm:non} as statements about Hamiltonians, recall the definition of the weak ETH in an energy interval in Subsection \ref{sec:ETH}.

\begin{corollary}
Let $n$ be a positive integer such that $N/2\ge n=\Omega(N)$. For $L=o(\sqrt N)$, the probability (with respect to the randomness of $h$) that $Q+\epsilon_1a^\dag ha$ obeys the weak ETH in the energy interval $(n-1/2,n+1/2)$ is $1-e^{-\Delta}$ for any $\Delta=o(N)$. For $L=\Omega(\sqrt N)$, the probability that $Q+\epsilon_1a^\dag ha$ obeys the weak ETH in the energy interval $(n-1/2,n+1/2)$ is $e^{-\Omega(L^2)}$.
\end{corollary}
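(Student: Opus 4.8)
The plan is to deduce the corollary from the individual-eigenstate statements Theorems~\ref{thm:ethv} and~\ref{thm:non} by a short averaging argument, with no new probabilistic input. First I would pin down which eigenstates lie in the window $(n-1/2,n+1/2)$. Since $Q$ has integer spectrum and the perturbation $\epsilon_1 a^\dag h a$ is infinitesimal, every eigenstate of $Q+\epsilon_1 a^\dag h a$ with fermion number $n'$ has energy $n'+O(\epsilon_1)$; as $|n'-n|\ge 1$ for $n'\ne n$, the only eigenstates falling in the interval are the $\binom{N}{n}$ eigenstates of $a^\dag h a$ with fermion number exactly $n$. Conditioned on a fixed $h$, the second step of the sampling $\mathcal P_{N,n}$ is precisely the uniform distribution over these eigenstates, so the probabilities in the theorems factor into an outer average over $h$ and an inner uniform average over the window. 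The hypothesis $N/2\ge n=\Omega(N)$ is exactly $1/2\ge\nu=\Omega(1)$, so both theorems apply verbatim.

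Next I would set up the bridge between the joint probabilities in the theorems and the per-Hamiltonian fraction that the weak ETH concerns. For a fixed $h$ let $q(h)$ denote the fraction of the $\binom{N}{n}$ eigenstates in the window that violate the ETH, where ``violate'' means that $T(\rho_A,\sigma_{\beta,A})$ exceeds the vanishing threshold supplied by the relevant theorem. By construction $\e_h[q(h)]$ equals the joint violation probability $\Pr_{|\psi\rangle\sim\mathcal P_{N,n}}(\cdots)$. Thus Theorem~\ref{thm:ethv} gives $\e_h[q(h)]=O(e^{-\Delta})$ in the regime $L=o(\sqrt N)$, while Theorem~\ref{thm:non} gives, for the obeying fraction $p(h):=1-q(h)$, the bound $\e_h[p(h)]=e^{-\Omega(L^2)}$ in the regime $L=\Omega(\sqrt N)$. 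Markov's inequality then converts these averages into tail bounds over the randomness of $h$.

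For $L=o(\sqrt N)$ I would apply Theorem~\ref{thm:ethv} with parameter $3\Delta$ (still $o(N)$), giving $\e_h[q(h)]=O(e^{-3\Delta})$, and then invoke Markov at threshold $e^{-\Delta}$: $\Pr_h(q(h)>e^{-\Delta})\le O(e^{-3\Delta})/e^{-\Delta}=O(e^{-2\Delta})\le e^{-\Delta}$ for large $N$. Hence with probability $1-e^{-\Delta}$ the violating fraction is at most $e^{-\Delta}\to 0$, i.e.\ the weak ETH holds. For $L=\Omega(\sqrt N)$ the logic is dual: since the weak ETH holding requires at least $p(h)\ge 1/2$ (indeed $p(h)\to 1$), Markov gives $\Pr_h(\text{weak ETH})\le\Pr_h(p(h)\ge 1/2)\le 2\,\e_h[p(h)]=e^{-\Omega(L^2)}$, which is the claimed bound.

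The steps are individually routine; the one point that needs care—and the only real content beyond bookkeeping—is the reparametrization of $\Delta$ so that a small \emph{average} fraction becomes simultaneously a small \emph{violating} fraction and a small \emph{failure probability}, matching the clean exponent $e^{-\Delta}$ in the statement. Here the freedom ``for any $\Delta=o(N)$'' is essential: I trade the constant factor lost to Markov against the slack in choosing $\Delta$, and I must keep $\Delta\to\infty$ so that the residual fraction $e^{-\Delta}$ genuinely vanishes in the thermodynamic limit, as the asymptotic notion of weak ETH (a vanishing fraction of ETH-violating eigenstates) demands.
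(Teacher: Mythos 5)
Your argument is correct and is essentially the derivation the paper leaves implicit (the corollary is stated without proof as an interpretation of Theorems \ref{thm:ethv} and \ref{thm:non}): identify the eigenstates in the window as exactly the $\binom{N}{n}$ states of fermion number $n$, note that $\mathcal P_{N,n}$ factors into the Haar average over $h$ and the uniform average over that window, and convert the joint probabilities into per-Hamiltonian fraction bounds via Markov. Your closing remark about reparametrizing $\Delta$ (and keeping $\Delta\to\infty$ so the violating fraction actually vanishes) correctly pins down the only delicate point in the bookkeeping.
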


Near the end of Subsection \ref{ss:MD}, we said that $\frac 12 (1+T(\psi_A,\sigma_{\beta,A}))$ is the success probability of the optimal protocol for predicting whether a given state of subsystem $A$ is $\psi_A$ or $\sigma_{\beta,A}$ by performing a measurement on a single copy of the given state. The measurement in this protocol is in the eigenbasis of $\psi_A-\sigma_{\beta,A}$. Since both $\psi_A$ and $\sigma_{\beta,A}$ are Gaussian states, each of them can be written as a tensor product of single-mode states with an appropriate choice of modes. Furthermore, since the correlation matrices of $\psi_A$ and $\sigma_{\beta,A}$ commute, the same set of modes works for both states.  Thus, we can measure the observable  $\psi_A-\sigma_{\beta,A}$  by measuring $L$ single-mode operators and classically combining the results. Complete details of the measurement will be given in Subsection \ref{ss:met}.

\subsection{Thermalization}

We initialize the system in a product state with fermion number $n$ and let $\nu=n/N$. Since the ensemble of SYK$q$ Hamiltonians is invariant with respect to permutations of indices, we may assume without loss of generality that the initial state is
\begin{equation}
|\phi\rangle:=a_1^\dag a_2^\dag\cdots a_n^\dag|0\rangle,
\end{equation}
where $|0\rangle$ is the vacuum state with no fermions. Since $|\phi\rangle$ has a definite fermion number, $H$ and $H_\textnormal{SYK}$ generate the same dynamics in the sense that
\begin{equation} \label{eq:24}
e^{-iHt}|\phi\rangle=e^{-int}e^{-iH_\textnormal{SYK}\epsilon_1t}|\phi\rangle=e^{-int}e^{-i\epsilon_1H_\textnormal{SYK2}t-i\epsilon_1\epsilon_2H_\textnormal{SYK4}t}|\phi\rangle.
\end{equation}

Let $L,m$ be positive integers such that $Lm$ is a multiple of $N$. Let $A_1,A_2,\ldots,A_m$ be $m$ possibly overlapping subsystems, each of which has exactly $L$ fermionic modes. Suppose that each fermionic mode in the system is in exactly $Lm/N$ out of these $m$ subsystems. Let
\begin{equation}
\phi(t):=e^{-iHt}|\phi\rangle\langle\phi|e^{iHt},\quad\phi(t)_{A_j}:=\tr_{\bar A_j}\phi(t)
\end{equation}
be the state and its reduced density matrix at time $t$, respectively.

Let $\tau$ be sufficiently large\footnote{Conceptually, $\tau$ needs to be sufficiently large such that the effect of the SYK4 term in Eq. (\ref{eq:24}) is significant at most time $t\in[0,\tau]$. At a technical level, Theorem \ref{cor} follows from Theorem \ref{thm:et}. The proof of the latter theorem in Ref. \cite{HH23} approximates the infinite-time average $\lim_{\tau'\to\infty}\e_{t\in[0,\tau']}$ by the long-time average $\e_{t\in[0,\tau]}$. $\tau$ needs to be sufficiently large such that the approximation error is negligible.} and $t$ be uniformly distributed in the interval $[0,\tau]$. Recall the definition (\ref{eq:thr}) of $\sigma_{\beta,A}$, where $\beta$ is given by (\ref{eq:temp}).

\begin{theorem} [thermalization] \label{cor} 
For $1/2\ge\nu=\Omega(1)$, subsystems of size
\begin{equation} \label{eq:subu}
L=o(N/\ln N)
\end{equation}
thermalize in the sense that ($\poly(N)$ denotes a polynomial of sufficiently high degree in $N$)
\begin{multline} \label{eq:thm}
\Pr_h\left(\Pr_J\left(\Pr_{t\in[0,\tau]}\left(\frac1m\sum_{j=1}^m\|\phi(t)_{A_j}-\sigma_{\beta,A_j}\|_1^2=\frac{O(L\ln N)}N\right)=1-e^{-\Omega(N)}\right)=1\right)\\
\ge1-1/\poly(N).
\end{multline}
\end{theorem}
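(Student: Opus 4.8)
The plan is to split the deviation from thermal into an equilibrium (bias) part and a temporal-fluctuation part, and to track the two sources of randomness separately: the free-fermion data $h$ (equivalently a Haar single-particle rotation $U$), which drives thermalization, and the SYK4 couplings $J$, whose only role is to lift the massive \emph{gap} degeneracy of the free spectrum so that a standard equilibration bound applies. Concretely, for each subsystem I write $\|\rho_{A_j}(t)-\sigma_{\beta,A_j}\|_1^2 \le 2\|\rho_{A_j}(t)-\bar\rho_{A_j}\|_1^2 + 2\|\bar\rho_{A_j}-\sigma_{\beta,A_j}\|_1^2$, where $\bar\rho=\lim_\tau\frac1\tau\int_0^\tau\rho(t)\,dt$. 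Since the many-body energies $E_S=n+\epsilon_1\sum_{a\in S}\lambda_a$, with $\lambda_a$ the GUE eigenvalues, are almost surely distinct in $h$, the time average dephases completely in the occupation basis of the single-particle eigenmodes, giving $\bar\rho=\sum_{|S|=n}p_S|S\rangle\langle S|$ with $p_S=|\langle S|\Psi_0\rangle|^2$. Note that $\bar\rho$, hence the entire bias term, depends only on $h$; this matches the placement of the bias inside $\Pr_h$ in (\ref{eq:thm}).

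For the fluctuation term I will invoke an equilibration bound of the Linden--Popescu--Short--Winter type \cite{LPSW09, Sho11, Rei08}, which for a Hamiltonian with non-degenerate energy gaps yields $\overline{\|\rho_{A_j}(t)-\bar\rho_{A_j}\|_1^2}\lesssim d_A^2/d_{\mathrm{eff}}$ with $d_A=2^L$ and $d_{\mathrm{eff}}=1/\sum_S p_S^2$. Although the free-fermion energies are non-degenerate, the gaps $\epsilon_1(\lambda_a-\lambda_b)$ are hugely degenerate (the same single-particle move recurs across exponentially many configuration pairs); the first-order shifts $\epsilon_1\epsilon_2\langle S|H_{\mathrm{SYK4}}|S\rangle$ make each gap configuration-dependent and break every such coincidence for generic $J$. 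Since this excludes only a measure-zero set of couplings, it holds with probability $1$ in $J$, which is exactly the ``$\Pr_J(\cdot)=1$'' layer. It then remains to lower bound $d_{\mathrm{eff}}$: because $|\Psi_0\rangle$ and the $|S\rangle$ are Slater determinants, $p_S=|\det U_{[n],S}|^2$ is a squared minor of a Haar unitary, and a Weingarten computation of $\e_h\sum_S p_S^2$ gives $d_{\mathrm{eff}}\ge\binom{N}{n}/\poly(N)$ with probability $1-1/\poly(N)$. Since $\binom{N}{n}=e^{\Theta(N)}$ while $4^L=e^{o(N)}$ for $L=o(N/\ln N)$, the time-averaged fluctuation is $e^{-\Omega(N)}$, and Markov in $t$ converts this into $\|\rho_{A_j}(t)-\bar\rho_{A_j}\|_1^2=e^{-\Omega(N)}$ with probability $1-e^{-\Omega(N)}$, the inner layer of (\ref{eq:thm}).

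The heart of the argument is the bias $\frac1m\sum_j\|\bar\rho_{A_j}-\sigma_{\beta,A_j}\|_1^2=O(L\ln N/N)$, and this is where ``thermalization without eigenstate thermalization'' is encoded: $\bar\rho_{A_j}=\sum_S p_S\,\tr_{\bar A_j}|S\rangle\langle S|$ is a mixture of eigenstate reductions each of which is $\Theta(1)$-far from $\sigma_{\beta,A_j}$ once $L\gtrsim\sqrt N$ (Theorem \ref{thm:non}), so the estimate can only come from cancellation across $S$, never from a term-by-term bound. Using permutation invariance of the SYK ensemble, the subsystem average equals the expectation over a single random $L$-subset, which aids concentration. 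To control the trace norm of the genuinely non-Gaussian $\bar\rho_A$ without the lossy $\sqrt{d_A}$ factor, I will pass to fidelity via (\ref{eq:rel}) (equivalently, bound $\|\bar\rho_A-\sigma_{\beta,A}\|_1^2$ by a $\sigma_{\beta,A}$-weighted $\chi^2$ divergence), reducing the task to a quadratic functional $\e_h\tr(\bar\rho_A\,\sigma_{\beta,A}^{-1}\bar\rho_A\,\sigma_{\beta,A}^{-1})$ that is accessible by Haar (Weingarten) integration over $U$. This splits into a deterministic ``annealed'' contribution --- the reduced uniform $n$-particle state $\omega_A=\tr_{\bar A}(P_n/\binom Nn)$, which is $O(L/N)$-close to $\sigma_{\beta,A}$ by the hypergeometric-versus-binomial equivalence of ensembles and is negligible --- and a fluctuation contribution carrying the $O(L\ln N/N)$, together with a second-moment (Chebyshev) estimate giving concentration with $1-1/\poly(N)$ failure probability.

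The main obstacle is precisely this last step. The weight $\sigma_{\beta,A}^{-1}$ is a product operator with eigenvalues as large as $(1-\nu)^{-L}$, so the computation only succeeds if the occupation statistics of $\bar\rho_A$ suppress those configurations strongly enough for the divergence to collapse to the fluctuation scale $O(L\ln N/N)$; quantifying this cancellation --- showing that the exponentially many $\Theta(1)$ eigenstate deviations interfere destructively down to exactly this scale, with the $\ln N$ entering through the logarithm of the relevant dimension in the moment/concentration bound and pinning the threshold $L=o(N/\ln N)$ --- is the delicate part. Once the bias bound holds with probability $1-1/\poly(N)$ over $h$, the three layers assemble as in (\ref{eq:thm}): conditioned on the good $h$-event (bias $O(L\ln N/N)$ and $d_{\mathrm{eff}}\ge\binom Nn/\poly(N)$), for almost every $J$ the gaps are non-degenerate and the time-averaged fluctuation is $e^{-\Omega(N)}$, so Markov in $t$ gives $\frac1m\sum_j\|\rho_{A_j}(t)-\sigma_{\beta,A_j}\|_1^2=O(L\ln N/N)$ with probability $1-e^{-\Omega(N)}$.
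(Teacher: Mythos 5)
Your architecture (dephase in the energy basis, equilibrate via an effective-dimension bound, then bound the bias of the time-averaged state) is internally coherent and matches the three probability layers of (\ref{eq:thm}), but it has a genuine gap at exactly the step you flag as delicate: the bound $\frac1m\sum_j\|\bar\rho_{A_j}-\sigma_{\beta,A_j}\|_1^2=O(L\ln N/N)$. This is not a technical loose end --- it \emph{is} the theorem. Everything else in your write-up (almost-sure non-degeneracy in $h$, generic gap-splitting by $H_\textnormal{SYK4}$, the Weingarten lower bound on $d_{\mathrm{eff}}$) is standard machinery, while the quantitative cancellation across the $e^{\Theta(N)}$ non-thermal eigenstate reductions is the entire content, and you only describe a plan for it. Moreover, the plan looks harder than the problem: upgrading the trace distance to a $\sigma_{\beta,A}^{-1}$-weighted $\chi^2$ divergence introduces weights as large as $(1-\nu)^{-L}=e^{\Theta(L)}$, and the moment $\e_h\tr\big(\bar\rho_A\,\sigma_{\beta,A}^{-1}\bar\rho_A\,\sigma_{\beta,A}^{-1}\big)$ couples the squared minors $p_S$ to the reductions $\tr_{\bar A}|S\rangle\langle S|$, i.e., high-order mixed moments of the same Haar unitary. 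Nothing in your outline indicates why this collapses to $O(L\ln N/N)$ rather than to something exponential in $L$, nor where the $\ln N$ and the threshold $L=o(N/\ln N)$ actually come from.

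The paper's route avoids this computation entirely, and it is worth seeing why. Because $\rho(t)$ has definite fermion number $n$ and each mode lies in exactly $Lm/N$ of the subsystems, the subsystem-averaged energy is \emph{exactly} thermal: $\frac1m\sum_j\tr\big(\rho_{A_j}(t)Q_{A_j}\big)=\nu L=\frac1m\sum_j\tr\big(\sigma_{\beta,A_j}Q_{A_j}\big)$. Hence the averaged relative entropy $\frac1m\sum_jS\big(\rho_{A_j}(t)\|\sigma_{\beta,A_j}\big)$ equals the entropy deficit $H_b(\nu)L-\frac1m\sum_jS\big(\rho_{A_j}(t)\big)$ with no energy error term, and Pinsker's inequality converts this into the averaged squared trace norm. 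The only hard input is the entanglement-thermalization bound of Theorem \ref{thm:et} from Ref.~\cite{HH22ET}, which supplies the $O(L\ln N)/N$ entropy deficit together with all three probability layers of (\ref{eq:thm}); no equilibration lemma, effective dimension, or gap-nondegeneracy argument over $J$ is needed, and the time dependence is handled inside that theorem rather than by a Markov argument. If you want to salvage your decomposition, the realistic fix is to replace the $\chi^2$/Weingarten program for the bias by this maximum-entropy-at-fixed-energy argument --- at which point the dephasing and equilibration scaffolding becomes unnecessary as well.
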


For any linear operator $B_j$ on subsystem $A_j$ with $\|B_j\|\le1$, Eq. (\ref{eq:tr}) implies that
\begin{equation}
\left|\tr\big(\phi(t)B_j\big)-\tr(\sigma_\beta B_j)\right|\le\|\phi(t)_{A_j}-\sigma_{\beta,A_j}\|_1.
\end{equation}
Thus, (\ref{eq:thm}) implies thermalization of physical properties measured on $o(N/\ln N)$ fermionic modes.

In contrast to Eq. (\ref{eq:subu}), reduced density matrices do not thermalize if the subsystem size is a constant fraction of the system size. Let $\e_{|A|=L}$ denote averaging over all subsystems of $L$ fermionic modes. There are ${N\choose L}$ such subsystems.

\begin{theorem} [failure of thermalization] \label{thm:nt}
Suppose that $1/2\ge\nu=\Omega(1)$. For $L=\Omega(N)$ and any $h,J,t$,
\begin{equation}
\e_{|A|=L}\|\phi(t)_A-\sigma_{\beta,A}\|_1=\Omega(1).
\end{equation}
\end{theorem}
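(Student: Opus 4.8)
The plan is to reduce the operator trace norm to a purely classical $\ell_1$ distance between particle-number distributions, then to show that the relevant classical law cannot match the thermal one once $L$ is a constant fraction of $N$. First I would exploit that both $\rho_A(t)$ and $\sigma_{\beta,A}$ commute with $Q_A$. Since $\rho(t)$ is pure with definite total number $n$ and $Q=Q_A+Q_{\bar A}$, tracing out $\bar A$ annihilates every matrix element connecting distinct eigenvalues of $Q_A$, so $\rho_A(t)$ is block diagonal in $Q_A$; the thermal state $\sigma_{\beta,A}=e^{-\beta Q_A}/\tr(e^{-\beta Q_A})$ is manifestly so (in fact maximally mixed within each block). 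Writing $p^{(A)}_k:=\tr(\Pi^A_k\rho(t))$ and $q_k:=\tr(\Pi^A_k\sigma_{\beta,A})$ for the block weights, where $\Pi^A_k$ projects onto $Q_A=k$, block diagonality gives $\|\rho_A(t)-\sigma_{\beta,A}\|_1\ge\sum_k|p^{(A)}_k-q_k|$, because each block contributes at least the modulus of its trace. Here $q_k=\binom{L}{k}\nu^k(1-\nu)^{L-k}$ is exactly $\mathrm{Binomial}(L,\nu)$.

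Next I would average over $A$ and use the triangle inequality to pass to the averaged distribution $\bar p_k:=\e_{|A|=L}p^{(A)}_k$, giving $\e_{|A|=L}\|\rho_A(t)-\sigma_{\beta,A}\|_1\ge\sum_k|\bar p_k-q_k|$. The crucial and pleasantly robust observation is that $\bar p$ is \emph{independent of the dynamics}. In the occupation basis $\e_{|A|=L}\Pi^A_k=\sum_x\Pr_A\!\big(|A\cap\mathrm{supp}(x)|=k\big)\,|x\rangle\langle x|$, and for every pattern $x$ with $|x|=n$ the probability $\Pr_A(|A\cap\mathrm{supp}(x)|=k)$ is the same hypergeometric weight depending only on $n$. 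Since $\rho(t)$ is supported entirely on patterns with $|x|=n$, this yields $\bar p_k=\binom{n}{k}\binom{N-n}{L-k}/\binom{N}{L}$, i.e.\ $\bar p=\mathrm{Hypergeometric}(N,n,L)$, for \emph{any} $h$, $J$, and $t$. The problem thus collapses to the single classical estimate $\mathrm{TV}\big(\mathrm{Hypergeometric}(N,n,L),\mathrm{Binomial}(L,\nu)\big)=\Omega(1)$.

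Finally I would establish this classical bound. Both laws have mean $\nu L$, but their variances differ by a constant factor, $\nu(1-\nu)L\,\tfrac{N-L}{N-1}$ for the hypergeometric versus $\nu(1-\nu)L$ for the binomial, a ratio $\tfrac{N-L}{N-1}\approx 1-L/N$ bounded away from $1$ exactly when $L=\Omega(N)$. With $\nu=\Omega(1)$ and $\nu\le1/2$, both laws obey a local central limit theorem on the scale $\sqrt N$, so after centering and rescaling they converge to centered Gaussians whose variances stand in this fixed ratio, and the total-variation distance converges to that between two centered Gaussians with variance ratio bounded away from $1$, a positive constant. I expect this last step to be the main technical obstacle, since a bare variance gap does \emph{not} suffice: on the support $\{0,\dots,L\}$ a distribution can differ from the binomial by $\Theta(L^2/N)$ in variance while lying only $O(1/N)$-far in total variation (shift $\Theta(1/N)$ mass to the extreme points). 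What rescues the argument is that the averaged law is \emph{exactly} hypergeometric, a concrete distribution with controlled tails, so one may legitimately compare bulk profiles through the local limit theorems rather than through moments alone. Care is needed to make the Gaussian comparison uniform in $N$ and to extract the explicit constant, but no step is conceptually delicate once the reduction to hypergeometric-versus-binomial is in hand.
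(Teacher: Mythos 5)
Your proposal is correct and follows essentially the same route as the paper: reduce to the particle-number distribution in $A$, observe that averaging over $A$ makes it exactly hypergeometric for every $h,J,t$ by fermion number conservation, compare with the binomial law of $\sigma_{\beta,A}$, and conclude via a Gaussian/CLT comparison exploiting the variance ratio $\frac{N-L}{N-1}$. The only cosmetic difference is that the paper witnesses the distance with a single tail projector $P_A^{>l}$ at $l=\nu L+\Theta(\sqrt L)$ rather than pinching onto all $Q_A$ blocks, which addresses exactly the subtlety you raise about a bare variance gap being insufficient.
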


\section{Proof sketches}

In this section, we give intuitive sketches of the proofs of our results.  Full calculations are deferred to Appendix~\ref{s:app}.

\subsection{Eigenstate thermalization} \label{ss:met}

For a density operator $\rho$, let $\langle B\rangle:=\tr(\rho B)$ denote the expectation value of an operator $B$. Let $\mathbf C$ be the correlation matrix with its elements given by
\begin{equation}
\mathbf C_{jk}:=\langle a_j^\dag a_k\rangle. \label{eq:C-def}
\end{equation}
It is easy to see that $\mathbf C$ is a Hermitian matrix of order $N$.

\begin{lemma} [\cite{BHK+22, HH23}] \label{l:haar}
$|\psi\rangle\sim\mathcal P_{N,n}$ means that $|\psi\rangle$ is a uniformly random Gaussian state with fermion number $n$ in the sense of Definition \ref{def:haar}.
\end{lemma}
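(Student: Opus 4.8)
The plan is to reduce the claim to the standard characterization of the eigendecomposition of a GUE matrix. A number eigenstate of $a^\dagger h a$ with fermion number $n$ is a Slater determinant $\prod_{k\in S}b_k^\dagger|0\rangle$ indexed by a size-$n$ subset $S\subseteq\{1,2,\ldots,N\}$, where the single-particle eigenmodes $b_k=\sum_j U^*_{jk}a_j$ (so that $b_k^\dagger=\sum_j U_{jk}a_j^\dagger$) are read off from the diagonalization $h=U\Lambda U^\dagger$. Crucially, such a state depends on the pair $(U,S)$ only through the occupied single-particle subspace $W_S:=\mathrm{span}\{Ue_k:k\in S\}\subseteq\mathbb C^N$, and only up to an irrelevant global phase. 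Hence the whole question reduces to identifying the distribution of $W_S$ and matching it against the unitarily invariant measure on $n$-dimensional subspaces demanded by Definition \ref{def:haar}.

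First I would invoke unitary invariance of the GUE: since $h$ and $VhV^\dagger$ are equidistributed for every fixed $V\in U(N)$, the eigenvector matrix $U$ is Haar-distributed on $U(N)$ --- modulo the diagonal phase subgroup $U(1)^N$ and the ordering of eigenvalues --- and is independent of the spectrum $\Lambda$. Second, in the procedure $\mathcal P_{N,n}$ the eigenstate is chosen uniformly among the $\binom{N}{n}$ fermion-number-$n$ eigenstates, which is exactly a uniformly random coordinate subset $S$, drawn independently of $U$. Third, I would check that both ambiguities above are harmless for the resulting state: rephasing an occupied mode $b_k\mapsto e^{i\theta_k}b_k$ multiplies the Slater determinant by an overall phase and leaves $W_S$ unchanged, while reordering the eigenvalues is absorbed by the uniform, independent choice of $S$. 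Finally, writing $W_S=U\cdot\mathrm{span}\{e_k:k\in S\}$, the Haar factor $U$ transports the (arbitrary) coordinate subspace to one whose law is $U(N)$-invariant; by transitivity of the $U(N)$-action on the Grassmannian $\mathrm{Gr}(n,N)$ together with uniqueness of the invariant measure, $W_S$ is distributed according to that invariant measure, which is precisely the distribution of occupied subspaces in Definition \ref{def:haar}.

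The main obstacle I anticipate is bookkeeping of these phase and ordering ambiguities rigorously rather than glossing over them: the statement ``$U$ is Haar'' is only literally true modulo $U(1)^N$ and the symmetric group permuting eigenvalues, so I would verify carefully that the map $(U,S)\mapsto\prod_{k\in S}b_k^\dagger|0\rangle$ descends to a well-defined map on the relevant quotient, and that the independent uniform choice of $S$ exactly averages out the permutation ambiguity. A related technical point is the measure-theoretic step that an independent uniform coordinate subset composed with a Haar rotation yields the invariant Grassmannian measure; this is intuitively immediate but deserves a clean argument via invariance and uniqueness of the Haar measure. Everything else is routine, and indeed the lemma is quoted from \cite{BHK+22, HH22ET}, so its content is an identification of distributions rather than a hard estimate.
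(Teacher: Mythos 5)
Your argument is correct and is the standard route: GUE eigenvectors form a Haar-random unitary (modulo column phases and eigenvalue ordering) independent of the spectrum, the chosen Slater determinant depends only on the occupied $n$-dimensional subspace, and its correlation matrix is a Haar-conjugated rank-$n$ projector, exactly matching Definition \ref{def:haar}. The paper itself gives no proof of this lemma --- it is imported from Refs.\ \cite{BHK+22, HH22ET} --- so there is nothing to contrast with; your handling of the phase/ordering ambiguities and the reduction to the invariant Grassmannian measure is precisely the content of the cited result.
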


\begin{definition} [uniformly random pure Gaussian state with definite fermion number] \label{def:haar}
A pure Gaussian state with fermion number $n$ is uniformly random if its correlation matrix is given by
\begin{equation}
\mathbf C=U^\dag\diag(\underbrace{1,1,\ldots,1}_\textnormal{$n$ ones},\underbrace{0,0,\ldots,0}_\textnormal{$N-n$ zeros})U,
\end{equation}
where $U$ is a unitary matrix chosen uniformly at random with respect to the Haar measure.
\end{definition}

Assume without loss of generality that the indices of the $L$ fermionic modes in subsystem $A$ are $1,2,\ldots,L$. Let $\mathbf C_A=(U_{n\times L})^\dag U_{n\times L}$ be the $L\times L$ upper left submatrix of $\mathbf C$, where $U_{n\times L}$ is the $n\times L$ upper left submatrix of $U$.

We can interpret $\mathbf C_A$ as the overlap of two projectors, as follows.  Let $P_n$ and $P_L$ be projectors of ranks $n$ and $L$, respectively, such that  $U_{n \times L} = P_n U P_L$. Then, $\mathbf C_A = P_L (U^\dag P_n U) P_L$.  We can view this as the overlap between a fixed projector $P_L$ and a random projector $U^\dag P_n U$.  

Let
\be
\mathbf C_A = V \diag(\lambda_1,\lambda_2,\ldots, \lambda_L) V^\dag
\ee
be the eigendecomposition of the Hermitian matrix $\mathbf C_A$, where $V$ is a unitary matrix of order $L$. Define a row of annihilation operators
\be
(b_1,b_2,\ldots,b_L) = (a_1,a_2,\dots,a_L)V.
\ee
Then, Eq. (\ref{eq:C-def}) implies that
\be \label{eq:cor}
\langle b_j^\dag b_k\rangle=\lambda_j \delta_{jk},
\ee
where $\delta_{jk}$ is the Kronecker delta. Since the reduced density matrix $\psi_A$ is a Gaussian state, it is fully determined by Eq. (\ref{eq:cor}) so that
\begin{equation} \label{eq:eiga}
  \psi_A= \prod_{j=1}^L \big( \lambda_j b_j^\dag b_j+(1-\lambda_j)b_jb_j^\dag\big).
\end{equation}
In the eigenbasis of $b_1^\dag b_1,b_2^\dag b_2,\ldots,b_L^\dag b_L$, $\psi_A$ has the matrix representation
\be
  \bigotimes_{j=1}^L\diag(\lambda_j,1-\lambda_j).
 \ee
In the same basis,
\begin{equation}  \label{eq:tha}
\sigma_{\beta,A}=\bigotimes_{j=1}^L\diag(\nu,1-\nu)
\end{equation}
is also a product state. Since the fidelity is multiplicative,
\begin{equation} \label{eq:f}
F(\psi_A,\sigma_{\beta,A})=\prod_{j=1}^L\left(\sqrt{\nu\lambda_j}+\sqrt{(1-\nu)(1-\lambda_j)}\right)^2.
\end{equation}

The optimal protocol described near the end of Subsection \ref{ss:MD} for predicting whether a given state of subsystem $A$ is $\psi_A$ or $\sigma_{\beta,A}$ proceeds as follows.  Measure the occupation numbers $b_1^\dag b_1,b_2^\dag b_2,\ldots, b_L^\dag b_L$ and let $m_1,m_2,\ldots,m_L$ be the corresponding measurement results. Each $m_j$ is a binary random variable with $\Pr(m_j=1)=\lambda_j$ or $\Pr(m_j=1)=\nu$ if the given state is $\psi_A$ or $\sigma_{\beta,A}$, respectively. We predict $\psi_A$ if
  \be\prod_{j=1}^L(\lambda_j/\nu)^{m_j}\ge1 \ee
  and predict $\sigma_{\beta,A}$ otherwise.
  This can be thought of as a likelihood-ratio test, in which we predict the state that makes our measurement outcomes more likely.  It is also the optimal Helstrom measurement with success probability $\frac12(1+T(\psi_A,\sigma_{\beta,A}))$.

\begin{proof} [Proof of Theorem \ref{thm:nontd}]
For $L>n$, $\mathbf C_A$ is singular, and the multiplicity of the eigenvalue $0$ is at least $L-n$. Using Eqs. (\ref{eq:eiga}), (\ref{eq:tha}) and since the fidelity (trace distance) is non-decreasing (non-increasing) under partial trace,
\begin{gather}
F(\psi_A,\sigma_{\beta,A})\le F\left(\bigotimes_{j=1}^{L-n}\diag(0,1),\bigotimes_{j=1}^{L-n}\diag(\nu,1-\nu)\right)=(1-\nu)^{L-n},\\
T(\psi_A,\sigma_{\beta,A})\ge T\left(\bigotimes_{j=1}^{L-n}\diag(0,1),\bigotimes_{j=1}^{L-n}\diag(\nu,1-\nu)\right)=1-(1-\nu)^{L-n}.
\end{gather}
\end{proof}

For $L\le\min\{n,N-n\}$, the joint probability distribution of $\lambda_1,\lambda_2,\ldots,\lambda_L$ is the Jacobi unitary ensemble with parameters $r=N-n-L$ and $s=n-L$ \cite{Rou07, LCB18, BHK+22}.

\begin{definition} [Jacobi unitary ensemble]
The probability density function of the Jacobi unitary ensemble with parameters $r,s>-1$ is
\begin{equation}
\mathcal J(\lambda_1,\lambda_2,\ldots,\lambda_L)\propto{\prod_{1\le j<k\le L}(\lambda_j-\lambda_k)^2}\prod_{j=1}^L(1-\lambda_j)^r\lambda_j^s,\quad0\le\lambda_j\le1.
\end{equation}
\end{definition}

We explain why the ETH holds and fails with high probability for $L=o(\sqrt N)$ and $L=\Omega(\sqrt N)$, respectively. To this end, consider the limit $L\to\infty$ with $L=o(N)$. In this case, Theorem 2.1 in Ref. \cite{Nag14} says that the empirical distribution of
\begin{equation}
\sqrt{\frac{N}{\nu(1-\nu)L}}(\lambda_j-\nu)
\end{equation}
converges weakly to the semicircle distribution with radius $2$ almost surely. Thus,
\begin{equation}
\lambda_j=\nu \pm \Theta(\sqrt{L/N})
\end{equation}
for almost all $j$ with high probability. Since $L=o(N)$, by Taylor expansion,
\begin{equation}
\sqrt{\nu\lambda_j}+\sqrt{(1-\nu)(1-\lambda_j)}=1-\Theta(\lambda_j-\nu)^2.
\end{equation}
Substituting this into Eq. (\ref{eq:f}),
\begin{equation}
F(\psi_A,\sigma_{\beta,A})=\prod_{j=1}^L\big(1-\Theta(L/N)\big)=
\begin{cases}
1-o(1), & L=o(\sqrt N),\\
1-\Omega(1), & L=\Omega(\sqrt N).
\end{cases}
\end{equation}

Proofs of the probabilistic bounds in Theorems \ref{thm:ethv}, \ref{thm:non} without assuming $L\to\infty$ or $L=o(N)$ are given in Appendix \ref{ss:et}.  There we do not use Theorem 2.1 in Ref. \cite{Nag14} but instead rely on a recent concentration result \cite{HH22} for the second moment of the Jacobi ensemble.

\subsection{Thermalization}

\paragraph{Proof sketch of Theorem \ref{cor}.}In a previous paper \cite{HH23}, we proved entanglement thermalization: With high probability, the von Neumann entropies of $\phi(t)_{A_j}$ and $\sigma_{\beta,A_j}$ are equal to leading order in $L$. So are the free energies of $\phi(t)_{A_j}$ and $\sigma_{\beta,A_j}$. Since the thermal state minimizes the free energy \cite{Weh78}, the free energy of $\phi(t)_{A_j}$ is only slightly higher than the minimum. Pinsker's inequality \cite{Aud14} implies that any state of low free energy is close to a thermal state in trace distance.

Condition (\ref{eq:subu}) ensures that with high probability, the free energy of $\phi(t)_{A_j}$ is sufficiently low so that the trace distance between $\phi(t)_{A_j}$ and $\sigma_{\beta,A_j}$ from the above analysis is $o(1)$.

\paragraph{Proof sketch of Theorem \ref{thm:nt}.}$\phi(t)$ for any $t\in\mathbb R$ has a definite fermion number, but $\sigma_\beta$ does not. This difference has its footprint in the reduced density matrices. If the fermion number operator (\ref{eq:qa}) is measured on a random subsystem $A$ of $L$ fermionic modes, we obtain a probability distribution on the integers $0,1,2,\ldots,L$. For $\phi(t)$, the distribution is hypergeometric corresponding to drawing $L$ balls without replacement from a pool of $n$ white and $N-n$ black balls. For $\sigma_\beta$, the distribution is binomial corresponding to drawing $L$ balls with replacement from the same pool.

\paragraph{Note added.}Very recently, we became aware of related work by Yu, Gong, and Cirac \cite{YGC23}. They studied the entanglement of random Gaussian states with definite fermion number.  Their Theorem 1 is conceptually similar to our Lemma \ref{lem:M-conc} but gives different bounds and is proved using different methods.

\section*{Acknowledgments}

This material is based upon work supported by the U.S. Department of Energy, Office of Science, National Quantum Information Science Research Centers, Quantum Systems Accelerator.  AWH was also supported by NSF grants CCF-1729369 and PHY-1818914 and NTT (Grant AGMT DTD 9/24/20).

\appendix

\section{Proofs} \label{s:app}

\subsection{Eigenstate thermalization} \label{ss:et}

\begin{lemma} \label{l:small}
For any $x,y\in\mathbb R^+$ such that $x+y\le1$,
\begin{equation}
1-(x-y)^2/\max\{x,y\}\le\big(\sqrt{xy}+\sqrt{(1-x)(1-y)}\big)^2\le1-(x-y)^2.
\end{equation}
\end{lemma}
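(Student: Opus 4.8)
The plan is to work directly with the quantity $S := \sqrt{xy} + \sqrt{(1-x)(1-y)}$, whose square is the middle term of the claimed chain. Expanding gives $S^2 = 1 - x - y + 2xy + 2\sqrt{xy(1-x)(1-y)}$, and a short computation then yields the exact identity
\[
1 - (x-y)^2 - S^2 = x(1-x) + y(1-y) - 2\sqrt{x(1-x)}\sqrt{y(1-y)} = \left(\sqrt{x(1-x)} - \sqrt{y(1-y)}\right)^2 .
\]
Since the right-hand side is a perfect square, it is nonnegative, which immediately establishes the upper bound $S^2 \le 1 - (x-y)^2$. The upper bound thus costs essentially nothing beyond the algebraic simplification.

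The lower bound is the main obstacle and needs more care. Because all three expressions are symmetric in $x$ and $y$, I would assume without loss of generality that $x \ge y$, so $\max\{x,y\} = x$; the boundary case $x=1$ forces $y=0$ and makes the lower bound read $0 \le S^2$, so I take $x<1$. Writing $u := \sqrt{x(1-x)}$ and $v := \sqrt{y(1-y)}$, the identity above rearranges to $1 - S^2 = (x-y)^2 + (u-v)^2$, so the target $1 - S^2 \le (x-y)^2/x$ is equivalent to
\[
(u-v)^2 \le (x-y)^2\,\frac{1-x}{x} .
\]
The key trick is to write $(u-v)^2 = (u^2-v^2)^2/(u+v)^2$ and to note that $u^2 - v^2 = x(1-x) - y(1-y) = (x-y)(1-x-y)$. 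Cancelling the common factor $(x-y)^2$ (the case $x=y$ giving $S^2=1$ and being trivial), it remains to prove $x(1-x-y)^2 \le (1-x)(u+v)^2$.

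To finish, I would lower-bound $(u+v)^2 \ge u^2 + v^2 = x(1-x) + y(1-y)$, discarding the nonnegative cross term $2uv$, which reduces the claim to the purely polynomial inequality $x(1-x-y)^2 \le (1-x)\big(x(1-x)+y(1-y)\big)$. Expanding both sides, their difference simplifies to $y\,(1 + x - y - 2x^2)$, so everything collapses to showing the factor $1 + x - y - 2x^2$ is nonnegative on the admissible region. This I would verify by a two-case split that genuinely uses the constraint $x+y\le 1$: when $x\le 1/2$ the tighter bound $y\le x$ gives $1+x-y-2x^2 \ge 1-2x^2 \ge 1/2$, and when $x\ge 1/2$ the bound $y\le 1-x$ gives $1+x-y-2x^2 \ge 2x(1-x) \ge 0$. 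The main subtlety is precisely this step: the inequality is false without $x+y\le 1$, and one must check that the crude estimate $(u+v)^2 \ge u^2+v^2$ loses nothing essential — which is exactly what the case analysis confirms.
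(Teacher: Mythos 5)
Your proof is correct, and it takes a genuinely different route from the paper's. The paper substitutes $x=\cos^2\frac{\theta+\alpha}{2}$, $y=\cos^2\frac{\theta-\alpha}{2}$, under which the middle quantity collapses to $\cos^2\alpha$ and $(x-y)^2$ to $\sin^2\theta\sin^2\alpha$; both bounds then drop out in two lines once the constraint $x+y\le1$ is translated into $\theta\ge\pi/2$ and hence $\max\{x,y\}\le\sin^2\theta\le1$. You instead stay entirely algebraic: your exact identity $1-S^2=(x-y)^2+\big(\sqrt{x(1-x)}-\sqrt{y(1-y)}\big)^2$ gives the upper bound immediately (arguably more transparently than the trig version, since the deficit is exhibited as a perfect square), and the lower bound is reduced via the factorization $u^2-v^2=(x-y)(1-x-y)$ and the estimate $(u+v)^2\ge u^2+v^2$ to a polynomial inequality settled by a two-case split. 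I checked the details: the expansion $ \text{RHS}-\text{LHS}=y(1+x-y-2x^2)$ is right, the division by $(u+v)^2$ and the cancellation of $(x-y)^2$ are legitimately handled (degenerate cases $x=y$ and $x=1$ set aside), and both cases of the final split correctly invoke $y\le x$ and $y\le 1-x$ respectively. The trade-off: the paper's substitution is shorter and makes the role of $x+y\le1$ geometrically evident in one stroke, while your argument is longer but fully elementary, requires no care about the range of the trigonometric parametrization, and isolates exactly where the hypothesis enters (only in the case $x\ge1/2$ of the last step).
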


\begin{proof}
Let
\begin{equation}
x=\cos^2\frac{\theta+\alpha}2,\quad y=\cos^2\frac{\theta-\alpha}2,\quad0\le\theta\pm\alpha\le\pi
\end{equation}
so that
\begin{equation} \label{eq:40}
\big(\sqrt{xy}+\sqrt{(1-x)(1-y)}\big)^2=\cos^2\alpha=1-\sin^2\alpha,\quad(x-y)^2=\sin^2\theta\cdot\sin^2\alpha.
\end{equation}
The condition $x+y\le1$ implies that $\theta\ge\pi/2$ so that
\begin{equation} \label{eq:41}
\max\{x,y\}\le\max_{\alpha\in[0,\pi-\theta]}\cos^2\frac{\theta-\alpha}2=\sin^2\theta\le1.
\end{equation}
We complete the proof by combining (\ref{eq:40}), (\ref{eq:41}).
\end{proof}

Let
\begin{equation}
M:=\sum_{j=1}^L(\lambda_j-\nu)^2
\end{equation}
be the shifted second moment of the Jacobi ensemble.

\begin{lemma} [\cite{HH22}] \label{lem:M-conc}
For any $\delta>0$,
\begin{equation}
\Pr\big(|M-\nu(1-\nu)L^2/N|>\delta\big)=O(e^{-\Omega(N\delta)\min\{N\delta/L^2,1\}}).
\end{equation}
\end{lemma}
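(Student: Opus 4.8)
The plan is to realize $M$ as a smooth function of the Haar-random unitary $U$ of Definition~\ref{def:haar} and to combine concentration of measure on $U(N)$ with a self-bounding Lipschitz estimate. Writing $\Pi:=U^\dagger P_nU$ for the Haar-random rank-$n$ projector and recalling $\mathbf C_A=P_L\Pi P_L$, one has
\begin{equation}
M=\sum_{j=1}^L(\lambda_j-\nu)^2=\big\|\mathbf C_A-\nu P_L\big\|_\mathrm{F}^2,
\end{equation}
a quadratic polynomial in the entries of $U$. Its \emph{global} Lipschitz constant is only $O(\sqrt L)$ (since $M\le L$), which alone would yield the Gaussian tail $e^{-\Omega(N\delta^2/L)}$ --- too weak to reach the claimed $\min\{N\delta/L^2,1\}$ exponent. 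The quadratic structure must therefore be exploited.

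First I would prove the structural fact that $\sqrt M$ is $O(1)$-Lipschitz on $U(N)$ in the Hilbert--Schmidt metric. Differentiating gives $\mathrm dM=2\tr\!\big((\mathbf C_A-\nu P_L)\,P_L\,\mathrm d\Pi\,P_L\big)$, and since $\mathrm d\Pi=\mathrm dU^\dagger P_nU+U^\dagger P_n\,\mathrm dU$ satisfies $\|\mathrm d\Pi\|_\mathrm{F}\le2\|\mathrm dU\|_\mathrm{F}$, Cauchy--Schwarz yields $|\mathrm dM|\le4\|\mathbf C_A-\nu P_L\|_\mathrm{F}\,\|\mathrm dU\|_\mathrm{F}=4\sqrt M\,\|\mathrm dU\|_\mathrm{F}$. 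Hence $\|\nabla M\|_\mathrm{F}\le4\sqrt M$ and $\|\nabla\sqrt M\|_\mathrm{F}=\|\nabla M\|_\mathrm{F}/(2\sqrt M)\le2$. This is the crux: although $M$ is not uniformly well-behaved, $\sqrt M$ is uniformly Lipschitz, and at a typical point its gradient is governed by $\sqrt M\approx\sqrt{\nu(1-\nu)}\,L/\sqrt N\ll\sqrt L$.

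Next I would compute the centering $\mu:=\e M$ by Weingarten calculus, using $\e\Pi=(n/N)I$ and the Haar second-moment formula; this gives $\e\tr\mathbf C_A=\nu L$ and a closed form for $\e\tr\mathbf C_A^2$, whence $\e M=\nu(1-\nu)L(L-1)/(N-1)$ to leading order (the residual $O(L/N)$ being negligible). Applying the standard subgaussian concentration for the $2$-Lipschitz function $\sqrt M$ gives $\Pr\!\big(|\sqrt M-\e\sqrt M|>u\big)\le2e^{-\Omega(Nu^2)}$; the same estimate shows $\mathrm{Var}(\sqrt M)=O(1/N)$, so $\e\sqrt M=\sqrt\mu-O\big(1/(N\sqrt\mu)\big)$ and we may center at $a:=\sqrt\mu$ at negligible cost.

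Finally I would transfer the bound to $M$ via $M-a^2=(\sqrt M-a)(\sqrt M+a)$: a deviation $|M-a^2|>\delta$ forces $|\sqrt M-a|>u$ with $u\gtrsim\delta/a$ when $\delta\lesssim a^2$ and $u\gtrsim\sqrt\delta$ when $\delta\gtrsim a^2$. Since $a^2\approx\mu\approx\nu(1-\nu)L^2/N$, the exponent $Nu^2$ is $\Omega(N^2\delta^2/L^2)=\Omega(N\delta\cdot N\delta/L^2)$ in the first regime and $\Omega(N\delta)$ in the second, and the two combine to $e^{-\Omega(N\delta\min\{N\delta/L^2,1\})}$. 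The hardest part is sharpening the constant from $N$ to $N-2L$: the ambient Ricci-curvature bound on $U(N)$ only delivers $N$, and the $N-2L$ factor --- which matters only when $L$ is a substantial fraction of $N/2$ --- appears to require the reduced effective dimension of the configuration, e.g.\ working directly with the log-gas density $\mathcal J$ with parameters $r=N-n-L$ and $s=n-L$, or with the nontrivial principal angles between $\mathrm{range}(P_L)$ and $\mathrm{range}(\Pi)$, rather than the full $U(N)$ geometry.
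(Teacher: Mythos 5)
The paper does not actually prove Lemma~\ref{lem:M-conc}: it is imported wholesale from Ref.~\cite{HH22LJ}, whose argument (judging from the appearance of $N-2L=r+s$ in the exponent) works with the Jacobi log-gas density $\mathcal J$ directly. Your route --- Lipschitz concentration on the unitary group applied to $\sqrt M=\|\mathbf C_A-\nu P_L\|_{\mathrm F}$ --- is therefore a genuinely different one, and it is sound. Indeed you do not even need calculus for the key step: the reverse triangle inequality gives $|\sqrt{M(U)}-\sqrt{M(V)}|\le\|P_L(U^\dag P_nU-V^\dag P_nV)P_L\|_{\mathrm F}\le2\|U-V\|_{\mathrm F}$; the function depends on $U$ only through $U^\dag P_nU$, so it descends to the Grassmannian and the central $U(1)$ direction of $U(N)$ causes no loss in the concentration rate; and your two-regime transfer from $\sqrt M$ to $M$ via $M-a^2=(\sqrt M-a)(\sqrt M+a)$ reproduces exactly the $\delta\min\{N\delta/L^2,1\}$ shape of the exponent.

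Two corrections. First, your closing worry is backwards: since $N-2L\le N$, the bound $e^{-\Omega(N)\delta\min\{N\delta/L^2,1\}}$ that your argument yields is \emph{stronger} than the stated $e^{-\Omega(N-2L)\delta\min\{N\delta/L^2,1\}}$, so no ``sharpening'' from $N$ to $N-2L$ is required --- the weaker prefactor is an artifact of the log-gas proof, and in every application in this paper $L\le N/3$ anyway, so the two are interchangeable. Second, the exact mean is $\e M=\nu(1-\nu)L(LN-1)/(N^2-1)$, not $\nu(1-\nu)L(L-1)/(N-1)$ (test $L=1$: the latter vanishes while $\e M=\operatorname{Var}(\lambda_1)=\nu(1-\nu)/(N+1)$). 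The discrepancy is $\nu(1-\nu)L(N-L)/(N^2-1)=O(L/N)$, and you should verify explicitly --- rather than just assert --- that for $\delta$ of this order the claimed tail is $e^{-O(1)}$ and hence vacuous, so recentering at the stated quantity costs nothing. With those two points addressed, your proposal constitutes a complete and somewhat more elementary proof of the lemma.
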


\begin{proof} [Proof of Theorem \ref{thm:ethv}]
Using Eq. (\ref{eq:f}) and Lemma \ref{l:small},
\begin{equation}
F(\psi_A,\sigma_{\beta,A})\ge\prod_{j=1}^L\max\{1-(\lambda_j-\nu)^2/\nu,0\}\ge1-M/\nu.
\end{equation}
Then, Eq. (\ref{eq:ethf}) follows from Lemma \ref{lem:M-conc}. Equation (\ref{eq:etht}) follows from (\ref{eq:rel}) and Eq. (\ref{eq:ethf}).
\end{proof}

\begin{proof} [Proof of Theorem \ref{thm:non}]
Using Eq. (\ref{eq:f}) and Lemma \ref{l:small},
\begin{equation} \label{eq:fe}
F(\psi_A,\sigma_{\beta,A})\le\prod_{j=1}^L\big(1-(\lambda_j-\nu)^2\big)\le e^{-M}.
\end{equation}
For $L\le n$, Eq. (\ref{eq:nons}) follows from (\ref{eq:fe}) and Lemma \ref{lem:M-conc}. 

For $L>n=\Omega(N)$, let $A'$ be an (arbitrary) subsystem of $n$ fermionic modes in $A$. We have proved that
\begin{equation}
\Pr_{|\psi\rangle\sim\mathcal P_{N,n}}\big(F(\psi_{A'},\sigma_{\beta,A'})=e^{-\Omega(N)}\big)=1-e^{-\Omega(N^2)}.
\end{equation}
Equation (\ref{eq:nons}) follows the fact that the fidelity is non-decreasing under partial trace.

Equation (\ref{eq:nont}) follows from (\ref{eq:rel}) and Eq. (\ref{eq:nons}).
\end{proof}

\subsection{Proof of Theorem \ref{cor}}

Let
\begin{equation}
S(\rho):=-\tr(\rho\ln\rho)
\end{equation}
be the von Neumann entropy of a density operator and
\begin{equation}
H_b(\nu):=-\nu\ln\nu-(1-\nu)\ln(1-\nu)
\end{equation}
be the binary entropy function.

\begin{theorem} [\cite{HH23}] \label{thm:et}
Suppose that $1/2\ge\nu=\Omega(1)$. For $L\le N/10$,
\begin{multline}
\Pr_h\left(\Pr_J\left(\Pr_{t\in[0,\tau]}\left(\frac1m\sum_{j=1}^mS(\phi(t)_{A_j})\ge H_b(\nu)L-\frac{O(L\ln N)}N\right)=1-e^{-\Omega(N)}\right)=1\right)\\
\ge1-1/\poly(N).
\end{multline}
\end{theorem}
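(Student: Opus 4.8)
The plan is to establish the entropy lower bound by the standard equilibration-plus-typicality route, exploiting that (Lemma~\ref{l:haar}) the eigenstates are random Gaussian states which, for fixed particle number $n$, form an orthonormal basis of the $\binom{N}{n}$-dimensional $n$-fermion sector. Writing the dephased state $\bar\rho=\sum_k p_k\ket{k}\bra{k}$ as in \eqref{eq:expl} and $\bar\rho_{A_j}:=\tr_{\bar A_j}\bar\rho$, I would split the target into a dynamical step (A), that $S(\rho_{A_j}(t))\approx S(\bar\rho_{A_j})$ for almost every $t$, and a static step (B), that $S(\bar\rho_{A_j})\ge H_b(\nu)L-O(L\ln N/N)$. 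Chaining these and averaging over $j$ gives the theorem, and the three nested probabilities should track the three randomness sources: $J$ supplying non-degenerate gaps, $t$ supplying equilibration, and $h$ supplying the ensemble average.

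For step (A) I would invoke a Linden--Popescu--Short--Winter-type equilibration bound. The key input is that $\ket{\Psi_0}$, expanded in the Haar-random Gaussian eigenbasis, spreads over the sector with large effective dimension $d_\mathrm{eff}:=1/\sum_k p_k^2=\Omega\!\big(\binom{N}{n}\big)=2^{\Omega(N)}$. Provided all gaps $E_k-E_l$ are distinct---which the infinitesimal term $\epsilon_1\epsilon_2 H_\mathrm{SYK4}$ secures for almost every $J$, explaining the inner $\Pr_J(\cdot)=1$---the time-averaged fluctuation obeys $\overline{\|\rho_{A_j}(t)-\bar\rho_{A_j}\|_1}\le\tfrac12\sqrt{d_{A_j}^2/d_\mathrm{eff}}=2^{-\Omega(N)}$ with $d_{A_j}=2^L$ for $L\ll N$. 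Markov's inequality then yields $\|\rho_{A_j}(t)-\bar\rho_{A_j}\|_1=2^{-\Omega(N)}$ for all but an $e^{-\Omega(N)}$ fraction of times (the inner $\Pr_t$), and the Fannes--Audenaert continuity bound converts this exponentially small trace distance into an entropy discrepancy $\le 2^{-\Omega(N)}\cdot O(L)=o(L\ln N/N)$, despite its $\ln\dim=\Theta(L)$ prefactor.

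For step (B), the crucial observation is that if the weights $p_k$ were uniform, then $\bar\rho=\Pi_n/\binom{N}{n}$ would be exactly the microcanonical state on the sector ($\Pi_n$ the projector onto the $n$-fermion sector), and $\bar\rho_{A_j}$ would equal $\omega_{A_j}:=\tr_{\bar A_j}\big(\Pi_n/\binom{N}{n}\big)$, which is maximally mixed within each fixed-$A_j$-particle-number block with \emph{hypergeometric} block weights. Since $\sigma_{\beta,A_j}$ is likewise maximally mixed within each block but with \emph{binomial} weights, the relative entropy collapses to the classical one, $D(\omega_{A_j}\|\sigma_{\beta,A_j})=D(\mathrm{hyp}\|\mathrm{bin})=O(L^2/N^2)$ for $L\ll N$---the same hypergeometric-versus-binomial dichotomy that drives Theorem~\ref{thm:nt}. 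I would then control the deviation of the true (non-uniform, Haar-distributed) $p_k$ from uniform using GUE/Haar concentration of the random eigenbasis, which should produce the outer $1-1/\poly(N)$ over $h$ and bound the resulting entropy loss by $O(L\ln N/N)$; combining gives $S(\bar\rho_{A_j})\ge S(\sigma_{\beta,A_j})-O(L\ln N/N)=H_b(\nu)L-O(L\ln N/N)$.

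The main obstacle is step (B). The naive route---lower-bounding $S(\bar\rho_{A_j})\ge\sum_k p_k S(\rho^{(k)}_{A_j})$ by concavity, with $\rho^{(k)}_{A_j}:=\tr_{\bar A_j}\ket{k}\bra{k}$---only yields $H_b(\nu)L-O(L^2/N)$, since each Gaussian eigenstate individually violates the ETH once $L\gtrsim\sqrt N$ (Theorem~\ref{thm:non}); the entire point is that the dephased mixture is far more thermal than any of its constituents, so the improvement to $O(L\ln N/N)$ must come from the cancellation of the $\Theta(\sqrt{L/N})$ correlation-matrix fluctuations upon mixing. Making this quantitative is delicate because the Fannes continuity separating $\bar\rho_{A_j}$ from $\omega_{A_j}$ again carries a $\Theta(L)$ prefactor, so I would need $\|\bar\rho_{A_j}-\omega_{A_j}\|_1=O(\ln N/N)$---a demanding estimate on how the Haar-random overlaps reweight the near-thermal reductions. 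I expect that working directly with relative entropy (convex, and tensorizing over the block structure) rather than trace distance, together with the fact that the linear-in-fluctuation contributions average to zero, will be necessary to bypass the lossy $L^2/N$ bound and reach the stated $O(L\ln N/N)$.
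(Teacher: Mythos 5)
This statement is imported verbatim from Ref.~\cite{HH22ET} (note the citation in the theorem header); the present paper contains no proof of it, so there is nothing in-document to compare your argument against. Judged on its own merits, your proposal has a genuine gap, and it sits exactly where you locate it: step (B). Step (A) is plausible --- the Linden--Popescu--Short--Winter bound with $d_{\mathrm{eff}}=\Omega\big(\tbinom{N}{n}\big)$ and non-degenerate gaps supplied by $\epsilon_1\epsilon_2 H_{\textnormal{SYK4}}$ for almost every $J$ does reduce the dynamical statement to a statement about the dephased state $\bar\rho$, and your reading of the three nested probabilities is sensible. But step (B) is the entire mathematical content of the theorem, and you do not prove it. The concavity bound $S(\bar\rho_{A_j})\ge\sum_k p_k S(\rho^{(k)}_{A_j})$ gives only $H_b(\nu)L-O(L^2/N)$, which after Pinsker yields a nontrivial trace-distance bound only for $L=o(\sqrt N)$ --- precisely the regime where the ETH already holds (Theorem \ref{thm:ethv}) and where the paper's headline claim is vacuous. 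The improvement to $O(L\ln N/N)$ in the window $\sqrt N\lesssim L\ll N/\ln N$ is exactly the cancellation-upon-mixing phenomenon the paper advertises, and your proposal gestures at it without supplying an argument.

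Your proposed repair route is also shakier than you suggest. Comparing $\bar\rho_{A_j}$ to the microcanonical reduction $\omega_{A_j}=\tr_{\bar A_j}\big(\Pi_n/\tbinom{N}{n}\big)$ is a natural idea, and your computation that $S(\omega_{A_j})=H_b(\nu)L-D(\mathrm{hyp}\Vert\mathrm{bin})=H_b(\nu)L-O(L^2/N^2)$ is correct. However, the weights $p_k=\abs{\langle k|\Psi_0\rangle}^2$ are squared determinantal overlaps of a fixed Slater determinant with a Haar-random Slater-determinant basis; these are far from uniform and heavy-tailed, so a bound of the form $\norm{\bar\rho_{A_j}-\omega_{A_j}}_1=O(\ln N/N)$ (which you would need for Fannes--Audenaert to survive the $\Theta(L)$ prefactor) is not something one can expect from generic Haar concentration, and you give no mechanism for obtaining it. Whatever route Ref.~\cite{HH22ET} actually takes --- and the free-fermion/Gaussian structure of the eigenstates, which lets one work with $L\times L$ correlation matrices rather than $2^L$-dimensional density matrices, is the obvious tool conspicuously absent from your step (B) --- the decisive estimate is missing from your write-up, so the proposal does not constitute a proof.
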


Recall the definition (\ref{eq:qa}) of $Q_A$. It is easy to see that
\begin{equation}
  \frac1m\sum_{j=1}^mQ_{A_j}=\frac{LQ}{N}
  \implies\frac1m\sum_{j=1}^m\tr\big(\phi(t)_{A_j}Q_{A_j}\big)=\nu L.
\end{equation}
Using Pinsker's inequality \cite{Aud14} between the trace distance and quantum relative entropy,
\begin{multline}
  \frac 12 \|\phi(t)_A-\sigma_{\beta,A}\|_1^2
  \le S\big(\phi(t)_A\|\sigma_{\beta,A}\big):=-\tr\big(\phi(t)_A\ln\sigma_{\beta,A}\big)-S(\phi(t)_A)\\
=\beta\tr\big(\phi(t)_AQ_A\big)+\ln\tr(e^{-\beta Q_A})-S(\phi(t)_A)
\end{multline}
so that
\begin{equation}
\frac1{2m}\sum_{j=1}^m\|\phi(t)_{A_j}-\sigma_{\beta,A_j}\|_1^2\le H_b(\nu)L-\frac1m\sum_{j=1}^mS(\phi(t)_{A_j}).
\end{equation}
Theorem \ref{cor} follows from this inequality and Theorem \ref{thm:et}.

\subsection{Proof of Theorem \ref{thm:nt}}

Let $i_1<i_2<\cdots<i_L$ be the indices of the $L$ fermionic modes in $A$. Define
\begin{equation} \label{eq:PAdef}
  P_A^{>l}=\sum_{\substack{(n_1,n_2,\ldots,n_L)\in\{0,1\}^{\times L} \\ \sum_{j=1}^Ln_j>l}}
  \prod_{j=1}^L\big(n_j+(1-2n_j)a_{i_j}a_{i_j}^\dag\big).
\end{equation}
Let $[N] := \{1,\ldots,N\}$ and
\begin{equation}
\binom{[N]}{n} := \{R \subseteq [N] : |R|=n\}
\end{equation}
be the set of size-$n$ subsets of $[N]$.  Let $\{|\phi_R\rangle\}_{R \in \binom{[N]}{n}}$ be the complete set of computational basis states with $n$ fermions, where
\be
\ket{\phi_R} := \left(\prod_{j\in R} a_j^\dag\right) \ket{0}.
\ee

By construction, $P_A^{>l}=(P_A^{>l})^2$ is a projector such that 
\begin{equation}
P_A^{>l}|\phi_R\rangle=|\phi_R\rangle~\textnormal{or}~0
\end{equation}
if $\ket{\phi_R}$ contains $>l$ or $\leq l$ fermions in $A$, respectively. Hence, 
\begin{multline}
  \e_{|A|=L}\|P_A^{>l}|\phi_R\rangle\|
  ={N\choose L}^{-1}\big|\{R'\subseteq[N]:|R'|=L~\textnormal{and}~|R\cap R'| > l\}\big|\\
  ={N\choose L}^{-1}\sum_{j>l}{n\choose j}{N-n\choose L-j}.
\end{multline}
The time-evolved state can be expanded as
\begin{equation}
e^{-iHt}|\phi\rangle=\sum_{R \in \binom{[N]}{n}} c_R(t)|\phi_R\rangle
\end{equation}
so that
\begin{multline} \label{eq:thg}
  \e_{|A|=L}\tr\big(\phi(t)_AP_A^{>l}\big)=\e_{|A|=L}\|P_A^{>l}e^{-iHt}|\phi\rangle\|^2=
  \e_{|A|=L} \sum_{R \in \binom{[N]}{n}} |c_R(t)|^2\|P_A^{>l}|\phi_R\rangle\|\\
  =\sum_{R \in \binom{[N]}{n}} |c_R(t)|^2
  \e_{|A|=L}\|P_A^{>l}|\phi_R\rangle\|
  ={N\choose L}^{-1}\sum_{j>l}{n\choose j}{N-n\choose L-j},\quad\forall t\in\mathbb R.
\end{multline}
Equations (\ref{eq:thr}), (\ref{eq:temp}) imply that
\begin{equation} \label{eq:tb}
\tr(\sigma_{\beta,A}P_A^{>l})=\frac1{\tr(e^{-\beta Q_A})}\sum_{j>l}{L\choose j}e^{-\beta j}=\sum_{j>l}{L\choose j}\nu^j(1-\nu)^{L-j}.
\end{equation}
Equations (\ref{eq:thg}) and (\ref{eq:tb}) are the tails of the hypergeometric and binomial distributions, respectively.
The distributions have the same mean $\nu L$ but different variances: $\nu(1-\nu)L\frac{N-L}{N-1}$  for the hypergeometric distribution and $\nu(1-\nu)L$ for the binomial distribution.  Furthermore, both distributions are well approximated by Gaussians matching those moments. Thus, their tails are distinguishable in the sense of (\ref{eq:dst}) for $l=\nu L+\Theta(\sqrt L)$.

The total variation distance between the hypergeometric and binomial distributions has been studied in the context of de Finetti theorem. The theorem states that for a permutation-invariant probability distribution of $N$ random variables, the marginal distribution of $L \ll N$ variables is close to a mixture of distributions, each of which represents $L$ independent and identically distributed random variables \cite{DF80}.
From either the Gaussian approximation or (in the $\nu=1/2$ case) Theorem 35 and Lemmas 45, 46 in Ref. \cite{DF80}, we have
\begin{equation} \label{eq:dst}
0<\e_{|A|=L}\tr(\sigma_{\beta,A}P_A^{>l})-\e_{|A|=L}\tr\big(\phi(t)_AP_A^{>l}\big)=\Omega(1)
\end{equation}
for $L=\Omega(N)$. We complete the proof by noting that
\begin{equation}
\e_{|A|=L}\|\phi(t)_A-\sigma_{\beta,A}\|_1\ge\e_{|A|=L}\left|\tr\big(\phi(t)_AP_A^{>l}-\sigma_{\beta,A}P_A^{>l}\big)\right|\ge\left|\e_{|A|=L}\tr\big(\phi(t)_AP_A^{>l}-\sigma_{\beta,A}P_A^{>l}\big)\right|.
\end{equation}

\printbibliography

\end{document}